\definecolor{darkgreen}{rgb}{0.0, 0.4, 0.13}
\definecolor{rouge}{rgb}{0.7294,0.0392,0.0392}
\definecolor{bleu}{rgb}{0.2,0.2,0.7020}
\definecolor{bclair}{rgb}{0.6235    0.6235    0.8980}
\theoremstyle{theorem}
\newtheorem{theorem}{Theorem}
\theoremstyle{definition}
\newtheorem{definition}{Definition}
\newtheorem{example}{Example}
\newtheorem{problem}{Problem}
\theoremstyle{remark}
\newtheorem{remark}{Remark}
\date{May 2020}
\title{Parameter-free and fast nonlinear piecewise filtering.\\
Application to experimental physics.\thanks{Work supported by Defi Imag'in SIROCCO and by ANR-16-CE33-0020 MultiFracs, France.}}
\author{Barbara Pascal\thanks{Univ Lyon, ENS de Lyon, Univ Lyon 1, CNRS, Laboratoire de Physique, F-69342 Lyon, France (\texttt{firstname.lastname@ens-lyon.fr}).} \and Nelly Pustelnik\footnotemark[2] \and Patrice Abry\footnotemark[2] \and Jean-Christophe G{\'e}minard\footnotemark[2] \and Val{\'e}rie Vidal\footnotemark[2]}
\begin{document}
\maketitle

\section{Introduction}

Signals or images collected from numerous experiments in physics can be, at least as a first order approximation, described as piecewise homogeneous (piecewise constant, piecewise linear,\ldots). 
Detecting and estimating such piecewise homogeneous regions thus constitute a crucial goal to extract the physically relevant information conveyed in such data. 
This remains, however, often challenging, as signals or images are usually altered by superimposed noises, possibly with low signal-to-noise ratio, which may hinder the interpretation of the corresponding experiments.
The joint need to denoise data while preserving edges and discontinuities pertaining phase changes or region boundaries often preclude the use of classical linear filtering and call for the use of advanced nonlinear signal and image processing techniques.

Solid friction provides us with a first representative example.
Indeed, when two, nominally flat, solid surfaces in contact are forced to slide against one another,
the shear force at the contact surface exhibits generally a characteristic tooth-shape signal (Fig.~\ref{fig:phyexp}a):
the force signal thus consists of  an alternation of slow linear rises, corresponding to the loading of elastic
energy in the driving system while the surfaces in contact do not move with respect to each another,
followed by sudden drops, corresponding to fast energy releases when surfaces slide \cite{Baumberger06}.
When solids are strongly pressed one against the other,
these two {\it phases} (rest and sliding at the contact surface) can easily be identified.
However, high confinement pressures tend to damage surfaces, a major limitations in the study of the microscopic mechanisms at play. 
Therefore, probing effectively and accurately frictional material properties required that experiments are performed at low confinement pressure.
This, however, induces that collected signals have low to very low signal-to-noise-ratios~\cite{Colas19} thus requiring advanced non linear filtering signal processing techniques to detect and analyse the piecewise linear shape of data. 

\begin{figure}[t!]
\centering
\subfloat[Stick-slip: piecewise linear signal]{\includegraphics[height=4cm]{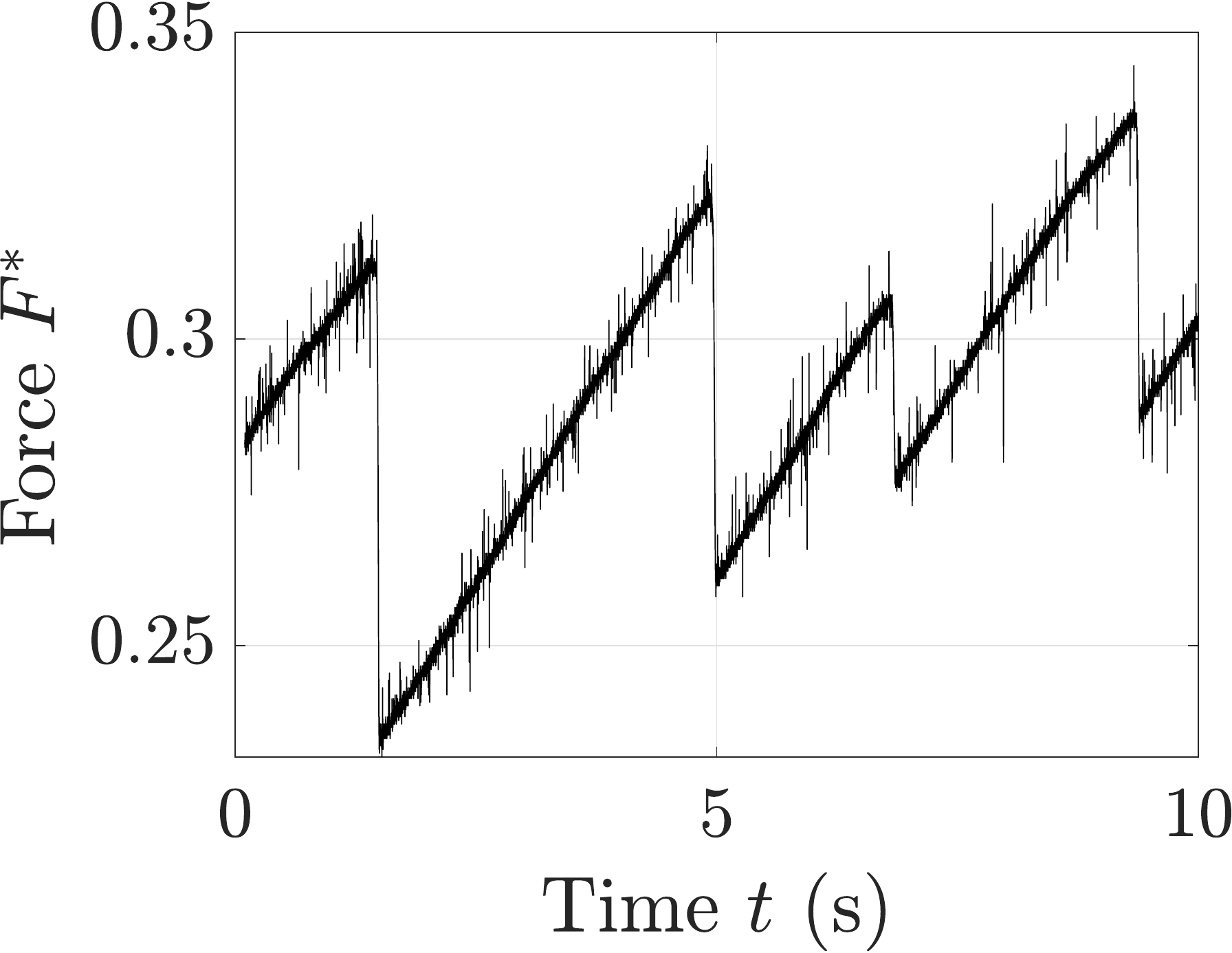}}\hspace{8mm}
\subfloat[Multiphase flow: piecewise homogeneous texture]{\raisebox{2cm}{\parbox{4.5cm}{\centering \includegraphics[height=4cm]{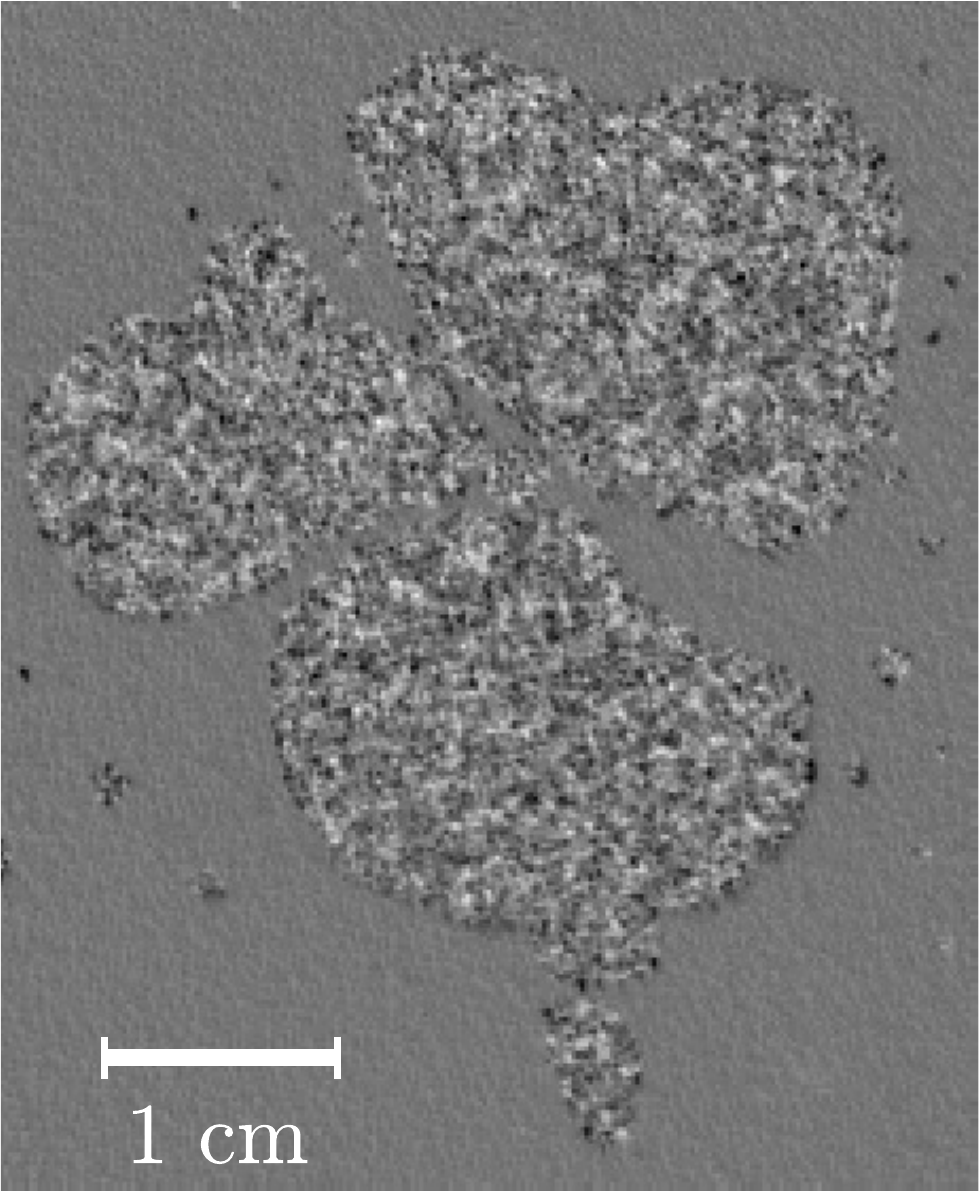}}}}
\caption{{\bf Experimental data in non linear physics.} (a): Normalized driving force $F^* = k\Delta x /(mg)$ [see Section~\ref{subsec:exp_sig}] as function of time $t$ in a solid friction experiment. (b): Direct image of a gas bubble in porous media multiphase flows (gas corresponds to the most scrambled region whereas liquid corresponds to smoother regions).\label{fig:phyexp}}
\end{figure}

Multiphase flows in porous media constitute another rich example. 
Hydrodynamics in porous media, notably mass transfer, is of prominent practical relevance in industry, e.g., for catalytic process studies. 
Hydrodynamics and mass transfer studies of multiphase flows in porous media traditionally involve packed beds and are well characterized.
However, recent experiments \cite{Serres_M_2018_j-ijmf_phe_bcc,pascal2018joint} consider innovative materials such as open-cell solid foams
which, due to high porosity and the resulting low pressure loss, are promising for industrial applications.
In such experiments, a liquid and a gas are forced to flow simultaneously through the foam and the characterization of such multiphase spatiotemporal dynamics stems from image analysis (Fig.~\ref{fig:phyexp}b). 
The challenge is here to identify liquid flows from gas bubbles. 
The rationale is that each phase can be associated to homogeneous textures in images and a crucial stake consists in identifying precisely gas bubble contours so as to measure their lengths.
Liquid and gas are both transparent and the foam is itself introducing a scrambled background, yielding low-contrast and blurred images, thus 
requiring advanced nonlinear image processing techniques form texture segmentation and contour estimation. 

These two emblematic examples share in common that the key aspects of the physics to be understood are driven by piecewise homogeneous phases. 
On one hand, studying friction requires identifying the stick and the slip phases, each associated with a piecewise linear signal. 
On other hand, studying multiphase flows implies detecting fluid phases, each associated with a piecewise homogeneous texture.
Piecewise-homogeneous signals or images are very common in numerous fields of nonlinear physics, very different in nature, such as time reversal of the magnetic field in turbulent dynamo \cite{Berhanu07}, on-off intermittency in creeping granular matter \cite{Divoux08}, DNA detection during translocation through nuclear pores \cite{Auger14},\ldots 

The present work proposes a generic nonlinear signal/image filtering unified framework for the analysis of piecewise homogeneous (piecewise-constant, piecewise-linear) experimental datasets.
The major challenges here are both to obtain fast algorithms so as to handle the large amount of data that need to be analyzed to yield accurate and relevant conclusions (e.g., in producing a phase diagram or in analyzing  video frames of large size images) and to be able to perform an automated and data-driven tuning of hyperparameters, unavoidably entering any nonlinear filtering procedure, and whose arbitrary selection (by expert visual inspection) might have drastic impacts on achieved outcomes and thus on a posteriori drawn physical interpretations. 

The unified signal/image nonlinear filtering framework proposed here is based on proximal schemes~\cite{combettes_signal_2005,bauschke_convex_2011} to obtain fast algorithms, and on the Stein unbiased estimator framework to design an automated data-driven hyperparameter tuning procedure. 

Section~\ref{s:ipgen} is dedicated to the formulation of this framework as an inverse problem, and recalls state-of-the-art strategies with focus on piecewise constant or linear estimation both in signal or images. 
Section~\ref{s:propmet} details the proposed algorithmic framework. 
Section~\ref{s:res} illustrates the performance on the two examples discussed above, solid friction and multiphase flows.

A documented toolbox (in {\sc Matlab}), for the implementation of this signal/image processing nonlinear filtering and data-driven hyperparameter tuning, is freely available at \\
\texttt{https://github.com/bpascal-fr/stein-piecewise-filtering}.

\section{Non linear filtering formulated as inverse problems} 
\label{s:ipgen}

\subsection{Direct models}

Let $S = \{\underline{n} = (n_1,n_2): 1\leq n_1 \leq N_1, 1\leq n_2 \leq N_2\}$ denote a lattice, supporting $\overline{x} = (\overline{x}_{\underline{n} })_{\underline{n} \in S}$, the unknown signal/image of size $ N = N_1\times N_2$ ($N_1=1$ for univariate 1D signal analysis and $N_1=K$ for  multivariate 1D signal analysis with $K$ components). 
Observation $z = (z_{m_1,m_2})_{1\leq m_1\leq M_1, 1\leq m_2\leq M_2}$ is of size $M = M_1\times M_2$ consists of a degraded version of $\overline{x}$, which stems  both from a linear degradation (e.g. filtering), denoted $A\in \mathbb{R}^{M\times N}$, and additive random noise, denoted $\mathcal{B}$. 

Handling an inverse problem relies first on an accurate design of the observation (or direct) model, which here takes the following form:
$$
z = \mathcal{B}(A \overline{x}).
$$

In this work, $S$ corresponds to an homogeneous neighborhood system. 
For instance, when considering a 1D signal, a site $n_2\in\{2,\ldots,N_1-1\}$ has two nearest neighbors $\mathcal{N}_{n_2} = \{n_2-1, n_2+1\}$. In a general regular rectangular lattice $S$ and for a 4-neighborhood system, every interior point has four neighbors that yields to $\mathcal{N}_{\underline{n} } = \{(n_1-1,n_2),(n_1+1,n_2),(n_1,n_2-1),(n_1,n_2+1)\}$. The pair $(S,\mathbb{E})$ constitutes a graph where $S$ contains the nodes and $\mathbb{E}$ determines the links between the nodes according to the neighboring relationship. \\

We detail this direct model on the two nonlinear physics problems described in Introduction (low confinement solid friction and porous media multiphase flows) and illustrated in Figure~\ref{fig:phyexp}. 
For solid friction, the challenging question consists in denoising obserbation $z$ (Figure~\ref{fig:phyexp}(a)), with $A = \text{Id}$ and the presence of additive impulsive noise.  
For multiphase flows, the challenging issues pertains to segmentation of textures such as the one in Figure~\ref{fig:phyexp}(b). 
In such a case, information $\overline{x}$ refers to piecewise constant scale-free texture features, and observation $z$ is obtained from a nonlinear multiscale transform (cf. Section~\ref{ss:textseg}).
Noise is assumed additive and Gaussian, with spatial and multiscale correlations.

\subsection{State-of-the-art}

Solving an inverse problem consists in providing an estimator $\widehat{x}$, as close as can be from information $\overline{x}$. 
This has been largely addressed in the literature and we propose first a brief overview of the main inverse problem solving streams (see also \cite{pustelnik2012_j-ieee-tsp_sur_ads,Cai_JF_2012_j-ams_ima_rtv}), before focusing, second, on the specific assumptions on the model required to design parameter-free and fast nonlinear piecewise filtering.\\

\noindent \textbf{Bayesian arguments and most standard models} --  On the first hand, Markov Random Fields (MRF) have been introduced in visual labelling to establish probabilistic distributions of interacting labels, aiming to analyze dependencies of a physical phenomena \cite{Li_S_2009_book_MRF}. In such a formalism $\overline{x}$ and $z$ are considered as realizations of random vectors $\overline{X}$ and $Z$ defined on the set $S$. $X$ is said to be a MRF on $S$ with respect to a neighborhood system $\mathbb{E}$ if and only if positivity (i.e. $P(X=x)>0$) and Markovianity $P(x_{\underline{n}}  \vert x_{\{S\}-{\underline{n}}}) = P(x_{\underline{n}} \vert x_{\mathcal{N}_{\underline{n}}})$, which models the local characteristics of $X$, are satisfied. Other properties such as homogeneity and isotropy can be depicted. 

The link between the MRF, characterized by its local properties, and another standard random field, the Gibbs random field, characterized by global properties, has been provided by Hammersley and Clifford \cite[Theorem~1]{Moller_J_book_spatialstat}.  We recall that a Gibbs distribution relative to the graph $\{S, \mathbb{E}\}$ is a probability measure and it has the following representation :
\begin{equation}
P(\omega) = \frac{1}{C} e^{-U(\omega)/T} 
\end{equation}
where $C$ is the normalizing constant  called the partition function such that  $C = \sum_{\omega} e^{-U(\omega)/T}
$ and $T$ stands for temperature, which controls the sharpness of the distribution. High temperature leads to all configurations equally distributed, while a temperature close to 0 concentrates the distribution around the global energy minima. $U(\omega)$ denotes the energy function 
and  $P(\omega)$ measures the probability of the occurence of a specific configuration $\omega$. The more probable configurations are those with the lower energies.   
The terminology comes from statistical physics where such measures are equilibrium states for physical systems (e.g. ferromagnets). $U(\omega)$ can be formulated with contributions from external fields (i.e. $x_{n_1,n_2}$) and pair interactions (e.g. $x_{n_1,n_2}x_{n_1+1,n_2}$). For instance, the Ising model reads
\begin{equation}
U(x) = -\alpha \sum x_{n_1,n_2} - \beta \Bigg(\sum x_{n_1,n_2}x_{n_1+1,n_2} + \sum x_{n_1,n_2}x_{n_1,n_2+1}\Bigg)
\end{equation}
considering $\omega=x$ and for some parameters $\alpha\in \mathbb{R}$ and $\beta>0$, which measure, the external magnetic moment and bonding strengths.

Geman and Geman \cite{geman1987:bayesian_resto} handle the maximization of the conditional probability distribution of $(x,e) \in \{S,\mathbb{E}\}$ given the data $z$ (i.e. find the mode of the posterior distribution), which is known as the maximum a posteriori estimation or penalized maximum likelihood. In \cite{geman1986bayesian,geman1987:bayesian_resto,Geman_D_1995}, the authors prove that the posterior is a Gibbs distribution over $\{S,\mathbb{E}\}$ with energy function 
\begin{equation}
\label{eq:gemangeman}
U(x,e) = \frac{1}{2\sigma^2} \Vert A x  - z\Vert^2 + \beta \sum_{\underline{n},\underline{n}'\in \mathcal{N}_{\underline{n}}}  \varphi(x_{\underline{n}} - x_{\underline{n}'})(1-e_{\underline{n},\underline{n}'}) + \alpha \psi(e)
\end{equation}
so that $\omega = (x,e)$ when $\mathcal{B}$ designates an additive zero-mean Gaussian noise with a variance $\sigma^2$.   $e\in \mathbb{E}$ denotes the coded line states and $\varphi(\eta) = -1$ if $\eta=0$ and 1 if $\eta\neq 0$. The first term acts as a data fidelity term and forces the approximation $x$ to be close to $z$, the second term allows small variations of $x$ except at the locations where $e_{\underline{n},\underline{n}'}=1$, and $\psi$ is constructed to organize the line process. Finally, $\alpha > 0$ and $\beta > 0$ denote regularization parameters controlling the smoothness of the solution and the length of the interfaces. This model can be interpreted as a coupled MRF dealing jointly with image restoration and edges detection: one MRF for the pixel values and one for the edges values that are described respectively in the image lattice or in its dual lattice.  This model has strong link with the continuous Mumford-Shah setting~\cite{Chambolle_A_1995}.

For specific choices of $\varphi$ and $\psi$ \cite{Geman_D_1995,Lobel_P_1997}, an alternative equivalent formulation is the Blake-Zisserman model formulated as
\begin{equation}
U(x) =\frac{1}{2\sigma^2} \Vert A x  - z\Vert^2 + \lambda \sum_{\underline{n},\underline{n}'\in \mathcal{N}_{\underline{n}}} \min \Big((x_{\underline{n}} - x_{\underline{n}'})^2, \eta\Big) 
\end{equation}
where $\lambda, \eta>0$, leading to the so-called truncated $\ell_2$  and whose interest is to favor piecewise smooth solution. Another model very close is the Potts model that can be interpreted as a $\ell_0$-penalization over $x_{\underline{n}} - x_{\underline{n}'}$ that is designed to provide piecewise-constant estimates \cite{storath2015:Potts}. For numerical reasons detailed below,  the most standard convex relation  is the anisotropic total-variation penalization  which reads~\cite{rudin1992nonlinear,chambolle2011first}:
\begin{equation}
\label{eq:tvenergy}
U(x) = \frac{1}{2\sigma^2} \Vert A x  - z\Vert^2 + \lambda \sum_{\underline{n},\underline{n}'\in \mathcal{N}_{\underline{n}}}  \vert x_{\underline{n}} - x_{\underline{n}'} \vert.
\end{equation}

\noindent \textbf{Solving inverse problems} -- Once an energy (or functional) has been designed to fit the considered problem, numerical strategies have to be designed to implement both the hyperparameter selection and the computation of the minimizing solution $x$, also corresponding to the most probable $\omega$ or moments of $P$. 

On one hand, Markov Chain Monte Carlo algorithms address simulations from a probability distribution $P$. The function $P$ can be written in a closed-form expression but the objective is generally to access the moments of $P$, which are not computable analytically. The two main techniques used in MCMC are Metropolis-Hastings, which relies on accept/reject mechanism and Gibbs sampler, which simplifies the high dimensional problem by successively simulating from different smaller dimensional components. The main limitation of these techniques is to be computationally intensive for solving large size inverse problems (see a contrario~\cite{Marnissi_Y_2018_j-entropy_aux_vm,Vacar_C_2019_j-jasp_uns_jds}).  We should also refer to some specific configurations where a closed form expression is available such as for the Ising model in 1D and 2D but which is not adapted for general inverse problem solving considered in this work. 

When one wants to estimate jointly the maximum of a posteriori and its hyperparameters, Bayesian hierarchical inference frameworks are particularly adapted and received considerable interest for addressing change-point detection or piecewise denoising problems \cite{Dobigeon_N_2007_j-tsp_joi_spc,Dobigeon_N_2007_j-csda_joi_sws,Pereyra_M_2013_j-tip_est_gcp} or texture segmentation \cite{Vacar_C_2019_j-jasp_uns_jds}. However, to the best of our knowledge, for the proposed unified 1D-2D framework considered in this work, such a general efficient strategy has not yet been designed.

 On other hand, during the last twenty years, important research efforts  have been dedicated to convex but non-smooth energy generally formulated as a sum of two or three terms: a data-fidelity term, a penalization and a constraint \cite{Combettes2011,bauschke_convex_2011,condat_primal-dual_2013}.  This framework is thus especially adapted when dealing with an energy such as \eqref{eq:tvenergy}. These algorithmic strategies are particularly efficient when dealing with $A$ full-rank which is rarely the case in standard restoration/reconstruction problems but which is more encountered in experimental physics processing when dealing either with denoising i.e. $A= \mathrm{Id}$ such as in friction experiments or when dealing with texture reconstruction especially adapted to study multiphase flow dynamics as it will be described later. However, when one handles such optimization strategy to find the minimizer of the energy $U$, the selection of the automated regularization parameter(s) is not addressed. 
For automated selection, one could consider either an empirical rule that consists in setting $\lambda \sim N^{1/2} \sigma/4$, with $N$ the signal size and $\sigma$ the noise standard deviation, estimated e.g., from the median value of the absolute value of the wavelet coefficients \cite{donoho1994ideal}, or an hybrid Bayesian hierarchical inference framework combined with $\ell_0$-minimization startegy \cite{Frecon_J_2017_j-tsp_bay_sl2} in the specific case of piecewise-constant denoising, or the recourse to Stein Unbiased Risk Estimator (SURE) which provides an unbiased estimator of the mean square error \cite{Benazza05,ramani2008monte,deledalle2014stein}. 
Our contribution focuses on such a strategy, its implementability, and its applicability on real physics experiments.

\section{Proximal operator based nonlinear filtering: fast algorithms and automated data-driven hyperparameter tuning}
\label{s:propmet}

\subsection{Nonlinear filtering formulation}  

In this work, we consider an estimator $\widehat{x}(z; \Lambda)$ of the quantity of interest $\bar{x}$, from a corrupted observation $z$, parametrized by $\Lambda$. 
The estimate is obtained from the minimization of an energy, inspired from \eqref{eq:tvenergy}, and defined as:
\begin{equation}
\label{eq:nl}
\widehat{x}(z; \Lambda) \in \underset{x\in\mathbb{R}^{N}}{\textrm{Argmin}}\; \Vert A x -z\Vert_2^2 +   \Vert D_\Lambda x \Vert,
\end{equation}
where the matrix $D_{\Lambda}$ models a weighted discrete differentiation operator, so that the penalization $\lVert D_{\Lambda} x \rVert$ enforces some regularity of the estimate $\widehat{x}(z; \Lambda)$. Specific instances of \eqref{eq:nl} are provided :
\begin{itemize}
\item To favor joint piecewise \textit{constancy} of K multivariate signals, the regularization parameters are stored in a vector $\Lambda = \left( \lambda_1, \hdots, \lambda_K\right) \in \left(\mathbb{R}_+^*\right)^K$, and the operator $D_{\Lambda}$ is a first order differentiation operator, acting componentwise, also called discrete gradient, writing for $k\in \lbrace 1, \hdots, K\rbrace$, $n_2 \in \lbrace 1, \hdots, N_2 - 1\rbrace$,
\begin{align}
\left( D_{\Lambda} x \right)_{k,n_2} = \lambda_k \left( x_{k,n_2+1} - x_{k,n_2} \right) 
\end{align} 
and where $\lVert \cdot \rVert$ is the mixed $\ell_{1,2}$-norm
\begin{align}
\lVert D_{\Lambda} x \rVert_{1,2} = \sum_{n_2 = 1}^{N_2-1} \sqrt{\sum_{k = 1}^K \left( D_{\Lambda} x \right)_{k,n_2}^2  }.
\label{eq:l12norm}
\end{align}
\item Enforcing joint piecewise \textit{linearity} requires a second order differentiation operator, named discrete Laplacian, defined for $k\in \lbrace 1, \hdots, K\rbrace$ and $n_1 \in \lbrace 2, \hdots, N_1 - 1\rbrace$,
\begin{align}
\label{eq:l12}
\left( D_{\Lambda} x \right)_{n_1,k} = \lambda_k \left( x_{n_1+1,k} - 2x_{n_1,k} +x_{n_1-1,k} \right)
\end{align}
 composed with the $\ell_{1,2}$-norm defined in~\eqref{eq:l12norm}.\\
\item Image segmentation is performed imposing piecewise constancy prior, using a two dimensional difference operator.
For an image $x \in \mathbb{R}^{N_1\times N_2}$ the horizontal and vertical gradients are computed for each pixel with $1\leq n_1\leq N_1-1$ and $1 \leq n_2 \leq N_2-1$
\begin{align}
\label{eq:diff_2D}
\left( D_{\Lambda} x \right)_{n_1,n_2} = \lambda \begin{pmatrix}
\left(D_1x\right)_{n_1,n_2}  \\
\left(D_2x\right)_{n_1,n_2} 
\end{pmatrix}= \lambda \begin{pmatrix}
 x_{n_1,n_2+1}-x_{n_1,n_2}  \\
x_{n_1+1,n_2}-x_{n_1,n_2} 
\end{pmatrix}
\end{align}
and coupled into an $\ell_{1,2}$-norm 
\begin{align}
\label{eq:l12_2D}
\lVert D_{\Lambda} x \rVert = \lambda \sum_{n_1 = 1}^{N_1-1} \sum_{n_2 = 1}^{N_2-1} \sqrt{\left(D_1x\right)_{n_1,n_2}^2 +\left(D_2x\right)_{n_1,n_2} ^2 } := \lambda \mathrm{TV}(x).
\end{align}
The above penalization is known as the isotopic Total Variation \cite{rudin1992nonlinear}. \\
\end{itemize}

The estimate $\widehat{x}(z;\Lambda)$ is the result of a trade-off between the fidelity to the observation model and some regularity priors and the balance is tuned by the hyperparameter $\Lambda$.
Hence, our purpose is twofold.
First, solving the minimization Problem~\eqref{eq:nl}, that is, for fixed hyperparameter $\Lambda$, given an observation $z$, compute $\widehat{x}(z; \Lambda)$ the minimizer of~\eqref{eq:nl}.
Second, finding the \textit{best} hyperparameter $\Lambda^{\dagger}$ minimizing the quadratic error $ \mathbb{E}\{\Vert \widehat{x}(z; \Lambda) - \overline{x} \Vert^2 \}$, i.e. 
\begin{problem}
\label{pb:min_R}
Find 
\begin{align}
\Lambda^{\dagger}= \underset{\Lambda \in \left( \mathbb{R}_+^*\right)^K}{\arg\min}  \, \, \mathbb{E}\{\Vert \widehat{x}(z; \Lambda) - \overline{x} \Vert^2 \} 
\end{align}
 where $\widehat{x}(z; \Lambda)$  is defined by \eqref{eq:nl} and the expectation is taken over all realizations of the noise $\mathcal{B}$ corrupting the observation $z = \mathcal{B}(A\bar{x})$.
\end{problem}

In the next sections, we specify the assumptions over $A$ and $\mathcal{B}$ allowing us to derive a fast algorithmic scheme to estimate $\Lambda^{\dagger}$.

\subsection{Convex non-smooth minimization}
\label{subsec:convex_min}

The objective function appearing in Problem~\eqref{eq:nl} is convex, since the composition of a linear operator and a norm is convex.
Yet, because of the presence of the norm $\lVert \cdot \rVert$, it is non-smooth. 
Consequently, the minimization of~\eqref{eq:nl} requires proximal algorithms~\cite{bauschke_convex_2011,Combettes2011,parikh2014proximal}, which in general suffer from low  convergence rate.
Nevertheless, provided that the operator $A$ is injective, it is possible to design accelerated primal-dual schemes~\cite{chambolle2011first} and to obtain linear convergence rate toward the minimizer of~\eqref{eq:nl}.
Such algorithms relies on proximity operators~\cite{parikh2014proximal}, whose definition is recalled in Definition~\ref{def:prox}.
Further, disposing closed-form expressions of the proximity operators of the data fidelity term and the penalization function  is a key element to design fast implementations of primal-dual algorithms.

\begin{definition}
\label{def:prox}
For a convex lower semi-continuous function $\varphi : \mathbb{R}^N \rightarrow \mathbb{R}\cup\{+\infty\}$ and $\tau>0$, the proximity operator is defined as
\begin{align}
(\forall x \in \mathbb{R}^N) \qquad \mathrm{prox}_{\tau \varphi}(x) = \underset{\widetilde{x}}{\arg\min} \, \frac{1}{2} \lVert \widetilde{x} - x \rVert^2 + \tau \varphi(\widetilde{x})
\end{align}
where $\lVert \cdot \rVert$ denotes the Euclidean norm on $\mathbb{R}^N$.
\end{definition}

Few examples of well-established closed-form expression for proximity operator of interest in this work are recalled.
\begin{example}
\label{ex:proxs}
 The proximity operator of the data fidelity term $\Vert A \cdot - z \Vert_2^2$ as a closed form expression that is, for every $\tau>0$,
\begin{align}
(\forall x \in \mathbb{R}^N) \qquad \mathrm{prox}_{\tau \lVert A\cdot -z\rVert^2}(x) = \left( \mathrm{Id} + 2 \tau A^{\top} A \right)^{-1} \left( x + 2\tau A^{\top}z\right).
\end{align}
\end{example}
\begin{example}
The proximity operator of the multivariate 1D $\ell_{1,2}$-norm defined in~\eqref{eq:l12norm} is, for every $y \in \mathbb{R}^{K\times N_2}$,
\begin{align}
\left( \mathrm{prox}_{\tau \lVert \cdot \rVert{1,2}}(y) \right)_{k,n_2} = \left\lbrace 
\begin{array}{ll}
\left( 1 - \frac{\tau}{\lVert y_{\cdot, n_2}\rVert_2}\right) y_{k,n_2} & \text{if} \, \lVert y_{\cdot, n_2}\rVert_2 > \tau, \\
0 & \text{otherwise},
\end{array}
\right. 
\end{align}
where $\lVert y_{\cdot,n_2} \rVert_2 := \sqrt{\sum_{k = 1}^K y_{k,n_2}^2  }$.
\end{example}
\begin{example} The proximity operator of the 2D $\ell_{1,2}$-norm of Equation~\eqref{eq:l12_2D}, for $y = \left( y^{(H)}, y^{(V)}\right) \in \mathbb{R}^{2\times N_1\times N_2}$, 
\begin{align}
\left( \mathrm{prox}_{\tau \lVert \cdot \rVert{1,2}}(y) \right)_{n_1,n_2} = \left\lbrace 
\begin{array}{ll}
\left( 1 - \frac{\tau}{\lVert y_{n_1, n_2}\rVert_2}\right) y_{n_1,n_2} & \text{if} \, \lVert y_{n_1, n_2}\rVert_2 > \tau, \\
0 & \text{otherwise},
\end{array}
\right. 
\end{align}
where $\lVert y_{n_1,n_2} \rVert_2 := \sqrt{(y^{(H)}_{n_1,n_2})^2 +(y^{(V)}_{n_1,n_2})^2  }$.
\end{example}

\begin{theorem}
\label{thm:CP}
Assuming that the deformation operator $A$ is injective and denoting by $\mu > 0$ the smallest eigenvalue of $2A^{\top} A$, the sequence $\left( x^{[t]} \right)_{t \in \mathbb{N}}$ defined in Algorithm~\ref{algo:pd_sure} converges toward the solution $\widehat{x}(z; \Lambda)$ of
\begin{align}
\label{eq:min_pd}
 \underset{x\in\mathbb{R}^{N}}{\textrm{minimize}}\; \Vert A x -z\Vert_2^2 +   \Vert D_\Lambda x \Vert.
\end{align}
Further, it has been proven in~\cite{chambolle2011first} that, for any $\epsilon > 0$, there exists $t_0$ such that 
\begin{align}
(\forall t \geq t_0) \quad \left\lVert \widehat{x}(z; \Lambda) - x^{[t]} \right\rVert \leq \frac{1+\epsilon}{t^2} \left( \frac{\lVert \widehat{x}(z; \Lambda) - x^{[0]}\rVert^2}{\mu^2\tau_0^2} + \frac{\lVert A\rVert^2\lVert \widehat{y}(z; \Lambda) - y^{[0]}\rVert^2}{\mu^2} \right)
\end{align}
where $\widehat{y}(z; \Lambda)$ denotes the solution of the dual problem of Problem~\eqref{eq:min_pd}.
Hence, the convergence rate of the iterates $\left(x^{[t]}\right)_{t \in \mathrm{N}}$ scales like $\mathcal{O}(1/t^2)$.
\end{theorem}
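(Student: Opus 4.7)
The plan is to reduce the theorem to the accelerated primal--dual result of Chambolle--Pock by checking that Problem~\eqref{eq:min_pd} fits its template and that the injectivity of $A$ provides the strong convexity required to trigger the $\mathcal{O}(1/t^2)$ regime. I would first rewrite \eqref{eq:min_pd} in the composite form
\begin{equation*}
\min_{x\in\mathbb{R}^N}\; G(x) + F(D_\Lambda x), \qquad G(x) := \|Ax-z\|_2^2, \quad F := \|\cdot\|,
\end{equation*}
so that the algorithm alternates a primal step involving $\mathrm{prox}_{\tau G}$ (given in closed form in Example~\ref{ex:proxs}) and a dual step involving $\mathrm{prox}_{\sigma F^*}$, recovered from the proximity operators in Examples~2--3 via Moreau's identity. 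This matches exactly the framework of \cite{chambolle2011first}.

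Next I would establish the strong convexity that powers the acceleration. Since $G$ is twice continuously differentiable with constant Hessian $2A^\top A$, and $A$ is injective, the smallest eigenvalue $\mu>0$ of $2A^\top A$ coincides with the strong convexity modulus of $G$. I would then invoke the accelerated scheme (Algorithm 2 of \cite{chambolle2011first}), whose step-size update rule
\begin{equation*}
\theta_t = 1/\sqrt{1+2\mu \tau_t}, \qquad \tau_{t+1} = \theta_t \tau_t, \qquad \sigma_{t+1} = \sigma_t/\theta_t,
\end{equation*}
preserves the invariant $\tau_t\sigma_t \|D_\Lambda\|^2 \le 1$ used throughout their analysis. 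These are precisely the updates hard-coded in Algorithm~\ref{algo:pd_sure}, so applicability is immediate.

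With the composite form and strong convexity in hand, Theorem~2 of \cite{chambolle2011first} applies directly: it yields convergence of the ergodic (and, by strong convexity, of the last) primal iterate to the unique minimizer $\widehat{x}(z;\Lambda)$ at rate $\mathcal{O}(1/t^2)$, and the constant in the bound is explicitly
\begin{equation*}
\frac{1}{\mu^2\tau_0^2}\|\widehat{x}(z;\Lambda) - x^{[0]}\|^2 + \frac{\|A\|^2}{\mu^2}\|\widehat{y}(z;\Lambda) - y^{[0]}\|^2,
\end{equation*}
with $\widehat{y}(z;\Lambda)$ the dual optimum. Transferring the primal--dual gap bound of \cite{chambolle2011first} to an iterate bound uses the $\mu$-strong convexity of $G$, which controls $\|x^{[t]} - \widehat{x}(z;\Lambda)\|^2$ by a multiple of the gap; the absorption of lower order terms is what produces the $(1+\epsilon)$ factor for $t$ large enough.

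The main obstacle, and really the only non-bookkeeping step, is verifying the translation between the statement here and the statement of \cite{chambolle2011first}: one must check that the factor $2$ in the data fidelity does not affect the strong convexity constant (it enters through $\mu$ being the smallest eigenvalue of $2A^\top A$, not $A^\top A$), and that the norm squared $\|D_\Lambda\|^2$ in the dual step size matches the operator norm used in their stability condition. Once these constants are aligned, the theorem follows as a direct specialization of the cited accelerated rate; no further computation is required.
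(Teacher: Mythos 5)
Your proposal follows essentially the same route as the paper's proof: both reduce Problem~\eqref{eq:min_pd} to the composite template $G(x)+F(D_\Lambda x)$ with $G(x)=\lVert Ax-z\rVert_2^2$ $\mu$-strongly convex (with $\mu = 2\min\mathrm{Sp}(A^\top A)>0$ by injectivity of $A$) and $F=\lVert\cdot\rVert$ convex, proper, lower semi-continuous, and then invoke Theorem~2 / Algorithm~2 of Chambolle--Pock as a direct specialization. Your additional bookkeeping (Moreau's identity for the dual prox, the step-size invariant, and the origin of the $(1+\epsilon)$ factor) is consistent with, and slightly more detailed than, the paper's argument.
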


\begin{proof}
This theorem is a direct	application of~\cite[Theorem~2]{chambolle2011first}, stated and demonstrated for the minimization of objective functionals defined as the sum of a $\mu$-strongly convex data fidelity and convex, proper, lower semi-continuous penalization, which is the case here.\\
Indeed, thanks to the assumption that $A$ is full-rank, the considered data fidelity term $\lVert Ax-z\rVert_2^2$ is $\mu$-strongly convex with modulus $\mu = 2\min \mathrm{Sp} (A^{\top} A) > 0$.
Further, the penalization being the composition of a linear operator and a norm is satisfies the aforementioned conditions.\\
Then, the primal-dual updates of Algorithm~\ref{algo:pd_sure} corresponds to the customization of the Algorithm~2 of~\cite{chambolle2011first} to the problem of finding $\widehat{x}(z,\Lambda)$ solution of~\eqref{eq:nl}, hence corresponding to $G(x) = \lVert Ax - z \rVert^2$, linear operator $K = D_{\Lambda}$ and $F = \lVert \cdot \rVert_{1,2}$.
\end{proof}

We have to note that, because of the operation $\left( \mathrm{Id} + 2 \tau A^{\top} A \right)^{-1}$, the proximity operator of the data-term might be uneasy to evaluate. However, for numerous configuration, this expression as a closed form expression. First when $A = \textrm{Id}$. Second, when $A$ is diagonalizable in a specific basis such as Fourier for circulant matrices (leading to $\mathcal{O}(MN)$ operations). Another specific choice of $A$ will be discussed in Section~\ref{ss:textseg}.

\subsection{Stein Unbiased Risk Estimate}

Once an efficient algorithmic strategy has been identified to estimate $\widehat{x}(z; \Lambda)$, the second major difficulty raised by Problem~\ref{pb:min_R} is that, in practice, one does not have access to the true signal/image $\bar{x}$ and hence cannot compute $\mathbb{E}\left\lbrace\Vert \widehat{x}(z; \Lambda) - \overline{x} \Vert^2 \right\rbrace$.
To handle this limitation, Stein proposed an Unbiased Risk Estimator~\cite{stein1981estimation}, denoted $\mathrm{SURE}(\Lambda)$, providing an usable approximation of the quadratic risk in the case of i.i.d. Gaussian noise.
This estimator was then widely extended to more general noise models~\cite{eldar2008generalized,pascal2020automated}.
\begin{theorem}[Stein Unbiased Risk Estimate]
\label{th:sure}
We denote $\widehat{x}(z; \Lambda)$ the parametric estimator defined in \eqref{eq:min_pd} of the ground truth $\bar{x}$ from observation $z = \mathcal{B}(A\bar{x})$ corrupted by a full-rank deformation operator $A$ and additive (possibly correlated) Gaussian noise $\mathcal{B}$, with covariance matrix $\mathcal{S}$.
The Stein Unbiased Risk Estimate, defined as 
\begin{align}
\label{eq:def_sure}
 \mathrm{SURE}(\Lambda) := \left\lVert \Phi \left(A\widehat{x}(z; \Lambda) - z \right)\right\rVert^2 + 2\mathrm{Tr}\left( \mathcal{S} \Phi^{\top}  \frac{\partial \widehat{x}(z; \Lambda)}{\partial z}\right) - \mathrm{Tr}(\Phi  \mathcal{S} \Phi^{\top} ),
\end{align}
satisfies the following unbiasedness property
\begin{align}
\mathbb{E}\left\lbrace \mathrm{SURE}(\Lambda) \right\rbrace = \mathbb{E}\{\Vert \widehat{x}(z; \Lambda) - \overline{x} \Vert^2 \}.
\end{align}
where $\Phi := \left(A^{\top} A\right)^{-1}A^{\top}$ and $\partial \widehat{x}(z; \Lambda)/\partial z$ denotes the Jacobian of $\widehat{x}(z; \Lambda)$ w.r.t. observation $z$.
\end{theorem}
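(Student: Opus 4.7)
The plan is to unfold the definition of $\mathrm{SURE}(\Lambda)$ and reduce the unbiasedness claim to a single application of Stein's integration-by-parts lemma. The essential algebraic tool will be the identity $\Phi A = \mathrm{Id}$, which follows from the injectivity of $A$ (and hence the invertibility of $A^{\top}A$) and guarantees that $\Phi$ acts as a left-inverse of $A$. Using this, one rewrites
\begin{align*}
\Phi\bigl(A\widehat{x}(z;\Lambda)-z\bigr) = \widehat{x}(z;\Lambda) - \Phi z, \qquad \Phi z = \Phi(A\overline{x}+\mathcal{B}) = \overline{x} + \Phi\mathcal{B},
\end{align*}
so the ``data fidelity'' piece of $\mathrm{SURE}(\Lambda)$ is in fact $\lVert \widehat{x}(z;\Lambda)-\overline{x}-\Phi\mathcal{B}\rVert^2$.

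Next I would expand this square and take expectations term by term. The diagonal term yields precisely the target risk $\mathbb{E}\{\lVert\widehat{x}(z;\Lambda)-\overline{x}\rVert^2\}$; the term $\mathbb{E}\{\lVert\Phi\mathcal{B}\rVert^2\}$ equals $\mathrm{Tr}(\Phi\mathcal{S}\Phi^{\top})$ by the definition of the covariance $\mathcal{S}$, matching (up to sign) the last summand in $\mathrm{SURE}(\Lambda)$. The cross term $-2\mathbb{E}\{\langle \widehat{x}(z;\Lambda)-\overline{x},\,\Phi\mathcal{B}\rangle\}$ simplifies because $\overline{x}$ is deterministic and $\mathcal{B}$ is centered, leaving only $-2\mathbb{E}\{\mathcal{B}^{\top}\Phi^{\top}\widehat{x}(z;\Lambda)\}$.

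The heart of the proof is then the Stein identity: for a weakly differentiable map $f:\mathbb{R}^{M}\to\mathbb{R}^{M}$ and a Gaussian vector $z$ with covariance $\mathcal{S}$,
\begin{align*}
\mathbb{E}\bigl\{\mathcal{B}^{\top}f(z)\bigr\} = \mathbb{E}\Bigl\{\mathrm{Tr}\Bigl(\mathcal{S}\,\tfrac{\partial f}{\partial z}(z)\Bigr)\Bigr\}.
\end{align*}
Applying this to $f(z)=\Phi^{\top}\widehat{x}(z;\Lambda)$ and using linearity of the Jacobian with respect to the constant matrix $\Phi^{\top}$ converts the cross term into exactly $-2\,\mathbb{E}\{\mathrm{Tr}(\mathcal{S}\Phi^{\top}\,\partial\widehat{x}/\partial z)\}$, which is the sign-flipped middle term of $\mathrm{SURE}(\Lambda)$. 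Re-assembling the three contributions gives the claimed identity $\mathbb{E}\{\mathrm{SURE}(\Lambda)\}=\mathbb{E}\{\lVert\widehat{x}(z;\Lambda)-\overline{x}\rVert^2\}$.

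The main obstacle I anticipate is the regularity prerequisite for invoking Stein's lemma: the map $z\mapsto\widehat{x}(z;\Lambda)$ is defined only implicitly, as the argmin of a non-smooth strongly convex functional. To legitimize the integration by parts I would cite (or briefly argue) that under the injectivity assumption on $A$ this argmin is single-valued and Lipschitz continuous in $z$, hence almost everywhere differentiable by Rademacher's theorem and weakly differentiable in the Sobolev sense—which is all that Stein's lemma requires. Once this regularity is in hand, the three algebraic matchings above conclude the proof.
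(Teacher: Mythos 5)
Your proof is correct, and it is essentially the argument the paper relies on: the paper itself gives no in-text derivation but simply defers to \cite{pascal2020automated}, where the generalized-SURE proof follows the same route you sketch. All the key ingredients are correctly in place: the left-inverse identity $\Phi A=\mathrm{Id}$ reducing the fidelity term to $\lVert \widehat{x}(z;\Lambda)-\overline{x}-\Phi\mathcal{B}\rVert^2$, the expansion whose noise-energy term gives $\mathrm{Tr}(\Phi\mathcal{S}\Phi^{\top})$, the correlated-noise Stein identity turning the cross term into the degrees-of-freedom term $2\,\mathrm{Tr}(\mathcal{S}\Phi^{\top}\partial\widehat{x}/\partial z)$ (in expectation), and the justification of weak differentiability via single-valuedness and Lipschitz continuity of the argmin map, which indeed follows from the strong convexity induced by the injectivity of $A$.
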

\begin{proof}
A complete and detailed proof was proposed in~\cite{pascal2020automated}.
\end{proof}

\begin{definition}[Degrees of freedom]
The second term in the definition of $\mathrm{SURE}(\Lambda)$, in Equation~\eqref{eq:def_sure},
\begin{align}
2\mathrm{Tr}\left( \mathcal{S} \Phi^{\top}  \frac{\partial \widehat{x}(z; \Lambda)}{\partial z}\right)
\end{align}
is called the \textit{degrees of freedom}.
\end{definition}

Since $\Phi \in \mathbb{R}^{M\times N}$ and $\partial \widehat{x}(z; \Lambda)/\partial z \in \mathbb{R}^{N \times M}$, are large size matrices computing the trace of $\Phi^{\top}  \partial \widehat{x}(z; \Lambda)/\partial z$ is very expansive and hence the evaluation of the degrees of freedom requires additional tools.
This difficulty is overcome using, a Monte Carlo (MC) strategy, only requiring the evaluation of the Jacobian on a random vector $\delta \in \mathbb{R}^M$.
Hence, it is only necessary to store a vector of size $N$, instead of manipulating a matrix of size $N \times M$, which decreases drastically the computational and memory costs.
Moreover, when $\widehat{x}(z; \Lambda)$ is obtained from a minimization scheme, there is often no closed-form expression of the Jacobian, hence we will approximate $\partial \widehat{x}(z; \Lambda)/\partial z [\delta]$ using Finite Difference (FD) approximation of the Jacobian.
Altogether, Monte Carlo and Finite Difference strategies lead to the following FDMC Stein Unbiased Risk Estimate.
\begin{theorem}[Finite Difference Monte Carlo SURE]
\label{thm:sure}
Let $\widehat{x}(z; \Lambda)$ denote a parametric estimator of ground truth $\bar{x}$ from observation $z = \mathcal{B}(A\bar{x})$ corrupted by a full-rank deformation operator $A$ and additive (possibly correlated) Gaussian noise $\mathcal{B}$, with covariance matrix $\mathcal{S}$ and $ \Phi = \left(A^{\top} A\right)^{-1}A^{\top}$.
The FDMC Stein Unbiased Risk Estimate is defined as
\begin{align}
\mathrm{SURE}_{\varepsilon, \delta}(\Lambda) &:=  \left\lVert \Phi \left(A\widehat{x}(z; \Lambda) - z \right)\right\rVert^2 \nonumber \\
&+  \frac{2}{\varepsilon}\left\langle \Phi^{\top} \left( \widehat{x}(z+\varepsilon \delta; \Lambda) - \widehat{x}(z; \Lambda) \right), \mathcal{S}\delta \right\rangle 
\label{eq:SURE}- \mathrm{Tr}(\Phi  \mathcal{S} \Phi^{\top} ).
\end{align}
Provided that $\widehat{x}(z; \Lambda)$ is uniformly Lipschitz w.r.t. observation $z$ and integrable against Gaussian density, $\mathrm{SURE}_{\varepsilon, \delta}(\Lambda)$ is an asymptotically unbiased estimator of the quadratic risk, i.e.
\begin{align}
\lim\limits_{\varepsilon \rightarrow0} \mathbb{E}\left\lbrace \mathrm{SURE}_{\varepsilon, \delta}(\Lambda) \right\rbrace = \mathbb{E}\{\Vert \widehat{x}(z; \Lambda) - \overline{x} \Vert^2 \},
\end{align}
where the expectation in the left hand side is taken over both the noise $\mathcal{B}$ and the Monte Carlo vector $\delta \sim\mathcal{N}(0, \mathrm{Id})$.
\end{theorem}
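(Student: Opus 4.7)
The plan is to reduce the statement to Theorem~\ref{th:sure}, whose exact version $\mathrm{SURE}(\Lambda)$ is already known to be unbiased for the quadratic risk. Since the data-fidelity term $\lVert \Phi(A\widehat{x}(z;\Lambda)-z)\rVert^2$ and the constant $\mathrm{Tr}(\Phi \mathcal{S}\Phi^{\top})$ appearing in $\mathrm{SURE}_{\varepsilon,\delta}(\Lambda)$ are identical to those in $\mathrm{SURE}(\Lambda)$, it suffices to establish
\begin{equation*}
\lim_{\varepsilon\to 0}\mathbb{E}_{\mathcal{B},\delta}\!\left[\frac{2}{\varepsilon}\left\langle \Phi^{\top}\bigl(\widehat{x}(z+\varepsilon\delta;\Lambda)-\widehat{x}(z;\Lambda)\bigr),\mathcal{S}\delta\right\rangle\right] = \mathbb{E}_{\mathcal{B}}\!\left[2\mathrm{Tr}\!\left(\mathcal{S}\Phi^{\top} \frac{\partial \widehat{x}(z;\Lambda)}{\partial z}\right)\right].
\end{equation*}
I would split this into a Monte Carlo step, performed at fixed $\varepsilon$, and a finite-difference limit step.

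For the Monte Carlo step, I use the classical trace identity: for $\delta\sim\mathcal{N}(0,\mathrm{Id})$ and any matrix $M$, $\mathbb{E}_\delta[\delta^{\top} M \delta]=\mathrm{Tr}(M)$. Applying this to $M = \mathcal{S}\Phi^{\top}\partial\widehat{x}(z;\Lambda)/\partial z$, and rewriting the bilinear pairing through the Jacobian-vector product, yields
\begin{equation*}
\mathbb{E}_\delta\!\left[2\left\langle \Phi^{\top} \frac{\partial\widehat{x}(z;\Lambda)}{\partial z}[\delta],\, \mathcal{S}\delta\right\rangle\right] = 2\mathrm{Tr}\!\left( \mathcal{S}\Phi^{\top} \frac{\partial \widehat{x}(z;\Lambda)}{\partial z} \right),
\end{equation*}
so that taking $\mathbb{E}_{\mathcal{B}}$ afterwards recovers exactly the degrees of freedom of Theorem~\ref{th:sure}. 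For the finite-difference step, the directional derivative interpretation gives
\begin{equation*}
\lim_{\varepsilon\to 0}\frac{\widehat{x}(z+\varepsilon\delta;\Lambda)-\widehat{x}(z;\Lambda)}{\varepsilon} = \frac{\partial \widehat{x}(z;\Lambda)}{\partial z}[\delta]
\end{equation*}
at every $(z,\delta)$ for which $\widehat{x}(\cdot;\Lambda)$ is differentiable at $z$. To swap the limit with $\mathbb{E}_{\mathcal{B},\delta}$, I would invoke dominated convergence: the uniform Lipschitz assumption with constant $L$ forces
\begin{equation*}
\left\lvert \frac{2}{\varepsilon}\bigl\langle \Phi^{\top}(\widehat{x}(z+\varepsilon\delta;\Lambda)-\widehat{x}(z;\Lambda)),\mathcal{S}\delta\bigr\rangle\right\rvert \leq 2L\,\lVert\Phi\rVert\,\lVert\mathcal{S}\rVert\,\lVert\delta\rVert^{2},
\end{equation*}
which is integrable under the product of the Gaussian law of $\delta$ and the density of $z=\mathcal{B}(A\bar{x})$ thanks to the integrability hypothesis. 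Combining the two steps with Theorem~\ref{th:sure} yields the claimed asymptotic unbiasedness.

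The main obstacle is the differentiability of the implicitly defined map $z\mapsto\widehat{x}(z;\Lambda)$, which minimizes the non-smooth functional in \eqref{eq:min_pd}. The uniform Lipschitz hypothesis is what rescues the argument on two fronts at once: by Rademacher's theorem it guarantees differentiability almost everywhere (hence at $\mathbb{P}$-almost every realization of $z$), and it supplies the integrable envelope needed for dominated convergence. A secondary subtlety is that the Monte Carlo identity uses the Jacobian applied to $\delta$; care is needed to check that the $\varepsilon\to 0$ pointwise limit commutes with the expectation over $\delta$ as well as over $\mathcal{B}$, which again follows from the same envelope. Once these regularity issues are in place, the rest is essentially an application of Fubini and the standard Hutchinson trace estimator.
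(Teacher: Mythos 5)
Your argument is correct, but it is genuinely different in character from what the paper does: the paper's ``proof'' of Theorem~\ref{thm:sure} is a one-line delegation to Theorem~2 of the companion reference \cite{pascal2020automated}, whereas you give a self-contained derivation that only leans on Theorem~\ref{th:sure} (the exact SURE), splitting the remaining work into the Hutchinson-type trace identity $\mathbb{E}_\delta[\delta^{\top} M \delta]=\mathrm{Tr}(M)$ for the Monte Carlo part and a dominated-convergence argument, with the uniform Lipschitz hypothesis doing double duty (Rademacher differentiability at $\mathbb{P}$-almost every $z$, plus the $\varepsilon$-free envelope $2L\lVert\Phi\rVert\lVert\mathcal{S}\rVert\lVert\delta\rVert^2$) for the finite-difference part; the symmetry of $\mathcal{S}$ is what lets you cycle the trace back to the degrees-of-freedom expression of Theorem~\ref{th:sure}, and that step checks out. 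What your route buys is transparency: the reader sees exactly where each hypothesis enters, and the argument works for any estimator, not just the minimizer of~\eqref{eq:min_pd}; what the paper's route buys is brevity and the guarantee that the measure-theoretic fine print (in particular the identification of the almost-everywhere classical Jacobian with the weak Jacobian appearing in the degrees of freedom of Theorem~\ref{th:sure}, which your write-up glosses over, and the role of the ``integrable against Gaussian density'' assumption, which is really consumed by Theorem~\ref{th:sure} rather than by your envelope bound) has been handled once and for all in the cited work. Neither of these two small imprecisions breaks your proof --- under the Lipschitz hypothesis the two notions of Jacobian coincide almost everywhere --- but they are worth a sentence if you intend the argument to stand alone.
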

\begin{proof}
The proof directly follows from Theorem~2 in~\cite{pascal2020automated}.
\end{proof}

Thus, Problem~\ref{pb:min_R} is replaced by 
\begin{problem}
\label{pb:min_SURE}
Find $\widehat{\Lambda}^{\dagger}= {\arg\min}_{\Lambda \in \left( \mathbb{R}_+^*\right)^K} \, \mathrm{SURE}_{\varepsilon, \delta}(\Lambda) $.
\end{problem}

\subsection{Automated data-driven hyperparameter tuning}
\label{sec:selec_lambda}

In order to solve Problem~\ref{pb:min_SURE}, two strategies can be considered. First a grid search, computing $\mathrm{SURE}_{\varepsilon, \delta}(\Lambda)$ over a large range of hyperparameter values, corresponding to the discrete set $\boldsymbol{\Lambda} =\left(\Lambda_i\right)_{i = 1}^I$ and selecting \textit{a posteriori} the hyperparameters of the grid for which $\mathrm{SURE}_{\varepsilon, \delta}(\Lambda_i)$ is minimal, denoted $\widehat{\Lambda}_{\mathrm{G}}$, as described in Algorithm~\ref{alg:grid}.
The major drawback of this approach is its computational cost, increasing algebraically with the dimension of the hyperparameters $\Lambda$.\\

\begin{algorithm}[h!]
\caption{Accelerated primal dual scheme for minimization of~\eqref{eq:nl}.\label{algo:pd_sure}}
\begin{algorithmic}
\REQUIRE Set $\varepsilon>0$, $\delta\in \mathbb{R}^M$, $\tau_0 > 0$, $\sigma_0 > 0$, such that $\tau_0\sigma_0 \lVert D_{\Lambda} \rVert^2 < 1$. .
\FOR{$\widetilde{z} = \left\lbrace z, z+\varepsilon\delta \right\rbrace$}
\STATE Choose $\widetilde{x}^{[0]}\in \mathbb{R}^N$, $x^{[0]}\in \mathbb{R}^N$, $y^{[0]} = D_{\Lambda} x^{[0]}$
\STATE $\partial_{\Lambda} \widetilde{x}^{[0]} \leftarrow 0_N$
\STATE $\partial_{\Lambda} x^{[0]} \leftarrow 0_N$
\STATE $\partial_{\Lambda} y^{[0]} \leftarrow D_{\Lambda} \partial_{\Lambda} x^{[0]}$
\FOR{$t = 0$ \TO $T_{\max}-1$}
\STATE \COMMENT{Primal-dual updates}
\STATE $y^{[t + 1]} = \mathrm{prox}_{\sigma_t \left(\lVert \cdot \rVert \right)^*} \left( y^{[t]} + \sigma_t D_{\Lambda} \widetilde{x}^{[t]}\right)$
\STATE $x^{[t + 1]} = \mathrm{prox}_{\tau_t \lVert A \cdot - \widetilde{z} \rVert^2} \left( x^{[t]} - \tau_t D_{\Lambda} y^{[t+1]}\right)$
\STATE $\vartheta_t = \sqrt{1 + 2\mu \tau_t}$, $\tau_{t+1} = \tau_t/\vartheta_t$ and $\sigma_{t+1} = \vartheta_t \sigma_t$
\STATE $\widetilde{x}^{[t+1]} = x^{[t+1]} + \vartheta_t\left( x^{[t+1]} - x^{[t]}\right)$
\ENDFOR
\STATE  $\widehat{x}(\widetilde{z} ; \Lambda) \leftarrow x^{[T_{\max}]}$ 
\ENDFOR
\STATE Compute $\mathrm{SURE}_{\varepsilon, \delta}(\Lambda)$ injecting $\widehat{x}(z ; \Lambda)$ and $\widehat{x}(z+\varepsilon\delta ; \Lambda)$ in Formula~\eqref{eq:SURE}
\RETURN{$\mathrm{SURE}_{\varepsilon, \delta}(\Lambda)$}
\end{algorithmic}
\end{algorithm}

 \begin{algorithm}[h!]
\caption{\label{alg:grid} Grid search for SURE minimization.}
\begin{algorithmic}
\REQUIRE Grid $\boldsymbol{\Lambda} = \left(\Lambda_i\right)_{i=1}^I$, $\varepsilon > 0$, $\delta \sim\mathcal{N}(0, \mathrm{Id}) \in \mathbb{R}^M$, 
\FOR{$i = 1$ \TO $I$}
\STATE $\mathrm{ERROR}(i) \leftarrow \mathrm{SURE}_{\varepsilon,\delta}(\Lambda_i)$, computed from Algorithm~\ref{algo:pd_sure}
\ENDFOR
\STATE $\widehat{i}_{\mathrm{G}} \leftarrow \underset{1 \leq i \leq I}{\arg\max} \, \, \mathrm{ERROR}(i)$
\RETURN $\widehat{\Lambda}_{\mathrm{G}} = \Lambda_{\widehat{i}_{\mathrm{G}}}$
\end{algorithmic}
\end{algorithm}

In order to provide faster implementations, we consider automated selection of hyperparameters.
To that aim, the number of hyperparameters is assumed to be $K = \mathcal{O}(1)$ and hence quasi-Newton algorithms are appropriate since they can handle very efficiently minimization in low dimension.
It requires to compute the gradient of $\mathrm{SURE}_{\varepsilon, \delta}(\Lambda)$ w.r.t. the hyperparameter $\Lambda$.
For this purpose, it was proposed a Stein Unbiased GrAdient Risk estimate, denoted $\mathrm{SUGAR}_{\varepsilon, \delta}(\Lambda)$~\cite{deledalle2014stein,pascal2020automated}, which, under the conditions of Theorem~\ref{thm:sure}, writes
\begin{align}
\mathrm{SUGAR}_{\varepsilon, \delta}(\Lambda) &:= 2 \left( \Phi A \frac{\partial \widehat{x}(z; \Lambda)}{\partial \Lambda} \right)^{\top} \left( \Phi \left(A\widehat{x}(z; \Lambda) - z \right)\right)\nonumber \\
\label{eq:SUGAR}&+ \frac{2}{\varepsilon}\left\langle \Phi^{\top} \left( \frac{\partial \widehat{x}(z+\varepsilon \delta; \Lambda)}{\partial \Lambda} - \frac{\partial \widehat{x}(z; \Lambda)}{\partial \Lambda} \right), \mathcal{S}\delta \right\rangle.
\end{align}

A sketch of quasi-Newton descent~\cite{nocedal2006numerical}, particularized to Problem~\ref{pb:min_SURE}, is detailed in Algorithm~\ref{alg:BFGS}.
It generates a sequence $\left(\Lambda^{[j]}\right)_{j \in \mathbb{N}}$ converging toward a minimizer of $\mathrm{SURE}_{\varepsilon, \boldsymbol{\delta}}(\Lambda )$, denoted $\widehat{\Lambda}_{\mathrm{BFGS}}$.\\
This algorithm relies on a gradient descent step involving a descent direction $d^{[j]}$ obtained from the product of BFGS approximated inverse Hessian matrix $H^{[j]}$ and the gradient $\mathrm{SUGAR}_{\varepsilon, \delta}(\Lambda)$ obtained from Algorithm~\ref{algo:pd_sugar}.
The descent step size $\alpha^{[j]}$ is obtained from a line search which stops when Wolfe conditions are fulfilled~\cite{nocedal2006numerical,curtis2017bfgs}.
Finally, the approximated inverse Hessian matrix $H^{[j]}$ is updated according to a BFGS strategy.
\begin{remark}
The line search is the most time consuming.
Indeed, the routines $\mathrm{SURE}$ and $\mathrm{SUGAR}$ are called for several hyperparameters of the form $\Lambda^{[j]} + \alpha d^{[j]}$, each call requiring to run differentiated primal-dual scheme twice.
\end{remark}

\label{subsec:auto_lambda}

\begin{algorithm}[h!]
\caption{Accelerated primal dual scheme for minimization of~\eqref{eq:nl} with iterative forward differentiation.\label{algo:pd_sugar}}
\begin{algorithmic}
\REQUIRE Set $\varepsilon>0$, $\delta\in \mathbb{R}^M$, $\tau_0 > 0$, $\sigma_0 > 0$, such that $\tau_0\sigma_0 \lVert D_{\Lambda} \rVert^2 < 1$
\FOR{$\widetilde{z} = \left\lbrace z, z+\varepsilon\delta \right\rbrace$}
\STATE Choose $\widetilde{x}^{[0]}\in \mathbb{R}^N$, $x^{[0]}\in \mathbb{R}^N$, $y^{[0]} = D_{\Lambda} x^{[0]}$
\STATE $\partial_{\Lambda} \widetilde{x}^{[0]} \leftarrow 0_N$
\STATE $\partial_{\Lambda} x^{[0]} \leftarrow 0_N$
\STATE $\partial_{\Lambda} y^{[0]} \leftarrow D_{\Lambda} \partial_{\Lambda} x^{[0]}$
\FOR{$t = 0$ \TO $T_{\max}-1$}
\STATE \COMMENT{Primal-dual updates}
\STATE $y^{[t + 1]} = \mathrm{prox}_{\sigma_t \left(\lVert \cdot \rVert \right)^*} \left( y^{[t]} + \sigma_t D_{\Lambda} \widetilde{x}^{[t]}\right)$
\STATE $x^{[t + 1]} = \mathrm{prox}_{\tau_t \lVert A \cdot - \widetilde{z} \rVert^2} \left( x^{[t]} - \tau_t D_{\Lambda} y^{[t+1]}\right)$
\STATE $\vartheta_t = \sqrt{1 + 2\mu \tau_t}$, $\tau_{t+1} = \tau_t/\vartheta_t$ and $\sigma_{t+1} = \vartheta_t \sigma_t$
\STATE $\widetilde{x}^{[t+1]} = x^{[t+1]} + \vartheta_t\left( x^{[t+1]} - x^{[t]}\right)$
\STATE \COMMENT{Forward iterative differentiation}
\STATE $\partial_{\Lambda} y^{[t + 1]} = \partial_{y}\mathrm{prox}_{\sigma_t \left(\lVert \cdot \rVert \right)^*} \left( \partial_{\Lambda}y^{[t]} + \sigma_t D_{\Lambda} \partial_{\Lambda}\widetilde{x}^{[t]} + \sigma_t \left( \partial_{\Lambda}D_{\Lambda} \right) \widetilde{x}^{[t]}\right) $
\STATE $\partial_{\Lambda}x^{[t + 1]} = \partial_{x}\mathrm{prox}_{\tau_t \lVert A \cdot - \widetilde{z} \rVert} \left( \partial_{\Lambda}x^{[t]} - \tau_t D_{\Lambda} \partial_{\Lambda}y^{[t+1]} - \tau_t \left(\partial{\Lambda} D_{\Lambda}\right) y^{[t+1]}\right)$
\STATE $\partial_{\Lambda}\widetilde{x}^{[t+1]} = \partial_{\Lambda}x^{[t+1]} + \vartheta_t\left( \partial_{\Lambda}x^{[t+1]} - \partial_{\Lambda}x^{[t]}\right)$
\ENDFOR
\STATE  $\widehat{x}(\widetilde{z} ; \Lambda) \leftarrow x^{[T_{\max}]}$ 
\STATE $\partial_{\Lambda} \widehat{x}(\widetilde{z} ; \Lambda) \leftarrow \partial_{\Lambda} x^{[T_{\max}]}$
\ENDFOR
\STATE Compute $\mathrm{SURE}_{\varepsilon, \delta}(\Lambda)$ injecting $\widehat{x}(z ; \Lambda)$ and $\widehat{x}(z+\varepsilon\delta ; \Lambda)$ in Formula~\eqref{eq:SURE}
\STATE Compute $\mathrm{SUGAR}_{\varepsilon, \delta}(\Lambda)$ injecting $\widehat{x}(z ; \Lambda)$, $\widehat{x}(z+\varepsilon\delta ; \Lambda)$, $\partial_{\Lambda} \widehat{x}(z ; \Lambda)$ and $\partial_{\Lambda} \widehat{x}(z+\varepsilon\delta ; \Lambda)$ in Formula~\eqref{eq:SUGAR}
\RETURN{$\mathrm{SURE}_{\varepsilon, \delta}(\Lambda)$ and $\mathrm{SUGAR}_{\varepsilon, \delta}(\Lambda)$}
\end{algorithmic}
\end{algorithm}

\begin{algorithm}[h!]
\caption{\label{alg:BFGS} Automated selection of hyperparameters minimizing quadratic risk.}
\begin{algorithmic}
\REQUIRE $\varepsilon > 0$, $\delta \sim\mathcal{N}(0, \mathrm{Id}) \in \mathbb{R}^M$
\ENSURE $\Lambda^{[0]} \in \left(\mathbb{R}_+\right)^K$, $H^{[0]} \in \mathbb{R}^{K\times K}$
\STATE $\mathrm{SUGAR}^{[0]} \leftarrow \mathrm{SUGAR}_{\varepsilon, \delta}(\Lambda^{[0]})$ computed from Algorithm~\ref{algo:pd_sugar}
\FOR{$j = 0$ \TO $ J_{\max}-1$}
\STATE $d^{[j]} = - H^{[j]} \mathrm{SUGAR}^{[j]}$
\STATE $\alpha^{[j]} \in \underset{\alpha \in \mathbb{R}}{\mathrm{Argmin}} \, \mathrm{SURE}_{\varepsilon,\delta}( \Lambda^{[j]} + \alpha d^{[j]})$, SURE computed from Algorithm~\ref{algo:pd_sugar}
\STATE $\Lambda^{[j+1]} = \Lambda^{[j]} + \alpha^{[j]} d^{[j]}$
\STATE $\mathrm{SUGAR}^{[j+1]} \leftarrow  \mathrm{SUGAR}_{\varepsilon, \delta}(\Lambda^{[j+1]}) $ computed from Algorithm~\ref{algo:pd_sugar}
\STATE $u^{[j]} =\mathrm{SUGAR}^{[j+1]}  - \mathrm{SUGAR}^{[j]}$
\STATE $H^{[j+1]}  =\left( \mathrm{Id}  - \frac{d^{[j]} \left( u^{[j]} \right)^\top }{\left( u^{[j]} \right)^\top d^{[j]} }\right)H^{[j]} \left( \mathrm{Id}  - \frac{u^{[j]} \left( d^{[j]} \right)^\top }{\left( u^{[j]} \right)^\top d^{[j]} }\right) + \alpha^{[j]}  \frac{d^{[j]} \left( d^{[j]}\right)^\top  }{  \left( u^{[j]} \right)^\top d^{[j]} }.
$
\ENDFOR
\RETURN $\widehat{\Lambda}_{\mathrm{BFGS}} = \Lambda^{[T_{\max}]}$
\end{algorithmic}
\end{algorithm}

\section{Nonlinear denoising in non linear physics: low confinement solid friction and porous media multiphase flows}
\label{s:res}

\subsection{Low confinement solid friction: Piecewise linear denoising.}
\label{subsec:exp_sig}

\noindent {\bf Context } --
Friction experiments aim at probing not only the characteristics of materials, but also the dynamics of systems involving surfaces in contact. In particular, they are paradigms for modeling and attempting to predict earthquake dynamics \cite{Marone98}. The classical solid friction experiment consists of towing a mass $m$ (so-called {\it slider}) over a substrate via a spring of stiffness $k$ pulled at velocity $V$ (see for instance Figure~2 in \cite{Colas19}). The signal representative of the slider dynamics is the force measured at the contact point between the spring and the slider. Among the different regimes described in solid friction, we can distinguish the stick-slip, characterized by a tooth-shaped signal alternating slow, linear rise and fast drops, the inertial regime, in which the signal becomes periodic and resembles a sine curve, and the continuous sliding regime, characterized by an almost constant signal superimposed with noise \cite{Baumberger06}. The appearance of creep, slow forward motion of the slider previous to a slip event, may also modify the signal shape. The challenge in such studies is to establish a {\it regime diagram} describing (and therefore, predicting) the system dynamics depending on the parameters ($m,k,V$). If the recognition of the different regimes is easy for large mass $m$, experiments with low confinement pressure, necessary to avoid surface wear, are challenging as they add noise to the experimental signals \cite{Colas19}. In this context, new signal processing tools are required. Here we focus in particular on signal denoising by approximating, at first order, the stick-slip signals to piecewise linear signals.\\

\noindent {\bf Data } --
Experiments of solid friction (taken from \cite{Colas19}) were performed by pulling a mass $m=30.7$~g (slider area $9 \times 6$~cm$^2$)  over a solid substrate. Both surfaces in contact consist of paper samples (Canson$^\circledR$, characterized by its roughness). A cantilever spring (metallic blade of stiffness $k$ between 168 and 3337 N/m) is pulled at constant velocity $V$ (between 42 and 7200~$\mu$m/s) and is in contact with the slider by a steel ball glued onto this latter, ensuring a punctual contact and the free motion of the contact point. The slider dynamics is quantified though the measurement of the blade deflection, $\Delta x$, by an inductive sensor (Baumer, IPRM 12I9505/S14). In all experiments, the mass $m$ is kept constant. We vary the parameters ($k,V$) and, for each experiment, record the normalized force signal $F^*$ from the blade deflection, $F^*= k \Delta x / (mg)$, where $g=9.81$~m/s$^{-2}$ is the gravitational acceleration. This signal is recorded with a sampling frequency of 2~kHz, and its size varies from about $4.5 \times 10^3$ to $7.5 \times 10^5$. \\

\noindent \textbf{Piecewise linear denoising} -- 
In~\cite{Colas19}, stick-slip signals were processed using an optimization formalism, falling under formulation~\eqref{eq:nl}, in order to enforce piecewise linear behavior.
The observation $z$ corresponds to the measured force signals, $A = \mathrm{Id}$, the linear operator $D_{\Lambda}$ is chosen to be the discrete Laplacian described in Equation~\eqref{eq:l12} (for an univariate signal, i.e. $K=1$), and $\lVert \cdot \rVert$ is the $\ell_{1,2}$-norm defined in~\eqref{eq:l12norm}, which reduces to the $\ell_1$-norm in the context of univariate signals.\\
The tedious task of tuning the regularization parameter $\lambda$ was performed by expert visual inspection and led to a choice $\lambda_{\mathrm{expert}} = 0.8$ uniformly applied to all signals, irrespective of the different experiment settings. 
Examples of noisy observations are shown in Figure~\ref{fig:stick_slip}(gray), with associated piecewise linear estimates obtained with $\lambda_{\mathrm{expert}}$ displayed in red. 
The regularized signals appear to capture well the transition between the stick and slip regimes.

Here, we propose to illustrate the use of the regularization parameter automated tuning strategy presented in Section~\ref{sec:selec_lambda} for piecewise linear denoising on stick-slip signals.
$\mathrm{SURE}_{\varepsilon, \delta}(\lambda)$ is used as the quality criterion and minimized over $\lambda$.
Therefore, both grid search and automated tuning are implemented and compared.\\

\begin{figure}[h!]
\centering
\includegraphics[trim = 3cm 3.5cm 3.5cm 2.25cm, clip, width = \linewidth]{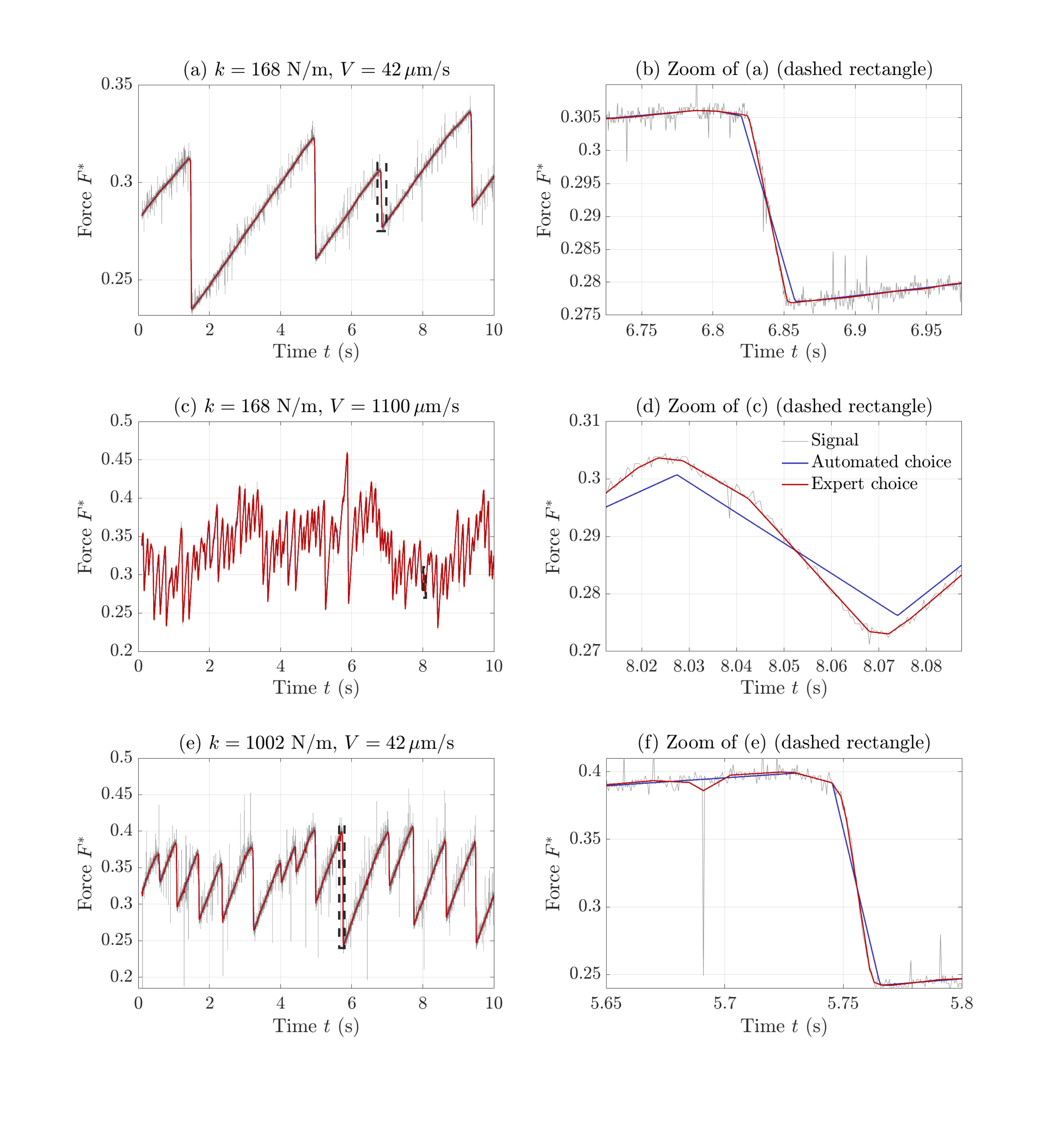}
\caption{\label{fig:stick_slip}
{\bf Stick-slip force signals.} Experimental data ((in grey) for three different experimental settings and nonlinear filtering enforcing piecewise linear behavior, with automated hyperparameter tuning (blue) and expert-selected hyperparameter (red).
}
\end{figure}

\noindent \textbf{Automated data-driven hyperparameter tuning} --
The Finite Difference step $\varepsilon$, involved in $\mathrm{SURE}_{\varepsilon, \delta}$ and $\mathrm{SUGAR}_{\varepsilon, \delta}$ computation [see Equations~\eqref{eq:SURE} and \eqref{eq:SUGAR}] is set to
\begin{align}
\varepsilon =  \frac{2\sigma}{N_1^{0.3}}
\end{align}
with $N_1$ the length of the considered stick-slip signal and $\sigma^2$ the estimated variance of the additive noise corrupting the signal.
Since no additional information about the noise is available, $\sigma^2$ is estimated using the sample variance estimator applied to observations.\\
$\mathrm{SURE}_{\varepsilon, \delta}$ (black curve in Figure~\ref{fig:grid_bfgs_plots}) is first computed over a grid of 15 logarithmically spaced values of the regularization parameter $\lambda$, using Algorithm~\ref{alg:grid}.
Then, $\lambda_{\mathrm{grid}}$ is defined as the minimum of $\mathrm{SURE}_{\varepsilon, \delta}(\lambda)$ over the grid and indicated by the `+' symbol.
Finally, the quasi-Newton Algorithm~\ref{alg:BFGS} for automated tuning of regularization parameter is run, providing $\lambda_{\mathrm{BFGS}}$, represented by the `$\color{bleu}\ast$' symbol.
The regularization parameter chosen by the expert is displayed for comparison purpose, an indicated by the `$\color{rouge}\times$' marker.
For each experimental setting $(k,V)$, the grid search optimal regularization parameter $\lambda_{\mathrm{grid}}$ and the automatically tuned regularization parameter $\lambda_{\mathrm{BFGS}}(k,V)$ obtained respectively from Algorithms~\ref{alg:grid}~and~\ref{alg:BFGS} are compared in Table~\ref{tab:lambda_val}, showing satisfactory agreement.\\

\begin{table}[h!]
\centering
\begin{tabular}{ccccc}
\toprule
 \multicolumn{2}{c}{$\lambda_{\mathrm{grid}}(k,V)$}& \multicolumn{3}{c}{$k$ (N/m)}\\
 \midrule
 &  & \textbf{168} & \textbf{1002}& \textbf{2254}\\
  \midrule
\multirow{3}{*}{\rotatebox{90}{$V$ ($\mu$m/s)}} &\textbf{42} & 21.6  & 12.7  & 23.1\\
\noalign{\vspace{2mm}}
& \textbf{1100} &  16.6 & 3.7 & 76.6 \\
\noalign{\vspace{2mm}}
& \textbf{4300} & 8.8 & 6.2 & 2.6 \\
\bottomrule
\end{tabular} \hspace{1cm}
\begin{tabular}{ccccc}
\toprule
 \multicolumn{2}{c}{$\lambda_{\mathrm{BFGS}}(k,V)$}& \multicolumn{3}{c}{$k$ (N/m)}\\
 \midrule
 &  & \textbf{168} & \textbf{1002}& \textbf{2254}\\
  \midrule
\multirow{3}{*}{\rotatebox{90}{$V$ ($\mu$m/s)}} &\textbf{42} &  7.9 & 10.0 & 0.2\\
\noalign{\vspace{2mm}}
&\textbf{1100} & 16.5 & 3.4 & 2.2\\
\noalign{\vspace{2mm}}
&\textbf{4300} & 9.1 & 4.9 & 3.2\\
\bottomrule
\end{tabular}
\caption{\label{tab:lambda_val}Grid search v.s. automated tuning of regularization parameter in piecewise linear denoising of stick-slip signals for different stiffness $k$ and velocity $V$.}
\end{table}

\begin{figure}[h!]
\centering
\includegraphics[width = \linewidth]{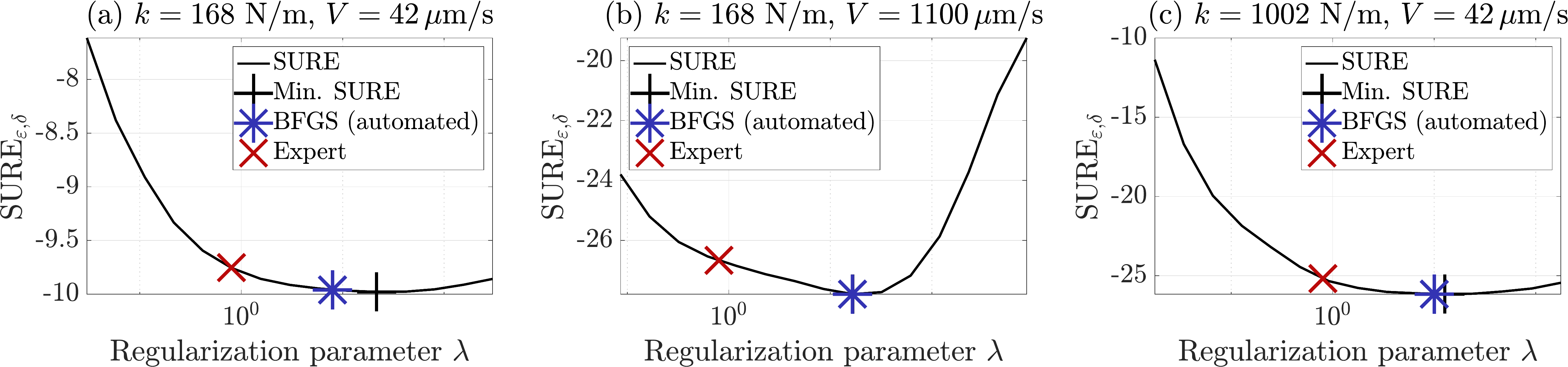}
\caption{\label{fig:grid_bfgs_plots}$\mathrm{SURE}_{\varepsilon, \delta}(\lambda)$. Grid search v.s. automated tuning of regularization parameter.}
\end{figure}

\noindent \textbf{Denoised data analysis} -- Table~\ref{tab:lambda_val} shows first that $\lambda_{\mathrm{BFGS}}$ is within the same order of magnitude as $\lambda_{\mathrm{expert}}$.
This is consistent with Figure~\ref{fig:stick_slip}, that further shows that denoised experimental signals obtained from nonlinear filtering enforcing piecewise linear behavior, with automated hyperparameter tuning (blue) and expert-selected hyperparameter (red) display similar shapes and behaviors. 
This is a very satisfactory outcome as the proposed data-driven and automated hyperparameter tuning yields outcomes very consistent with those obtained from expert choices, without making use of any a priori information, and relying on data only instead.

Table~\ref{tab:lambda_val} also shows that the automated procedure yields different regularization parameters for the different $(k,V)$ configurations, illustrating
an ability to finely adapt to data, which would not be possible - or would be too much time-consuming - for an expert.
Table~\ref{tab:lambda_val} further reveals that the automatically selected regularization parameters, $\lambda_{\mathrm{BFGS}} $, are, for almost all $(k,V)$ configurations, slightly larger than the expert-selected ones,  $ \lambda_{\mathrm{expert}}$, hence yielding overall more regular signals.
Figure~\ref{fig:stick_slip}(f) shows that the red signal, obtained with $\lambda_{\mathrm{expert}}$, displays discontinuities (e.g, around $t =5.7$~s) which are due to noise rather than to the physical mechanisms of interest, that are satisfactorily properly discarded on blue signal, obtained with the automated selection $\lambda_{\mathrm{BFGS}} $, hence showing the interest of tuning the hyperparameter to each specific signal.
However, Figure~\ref{fig:stick_slip}(b) and (d) also shows small yet visible differences during the slip-phase (fast decrease) between the red signals, obtained with $\lambda_{\mathrm{expert}}$, and the blue signals, obtained with the automated selection $\lambda_{\mathrm{BFGS}} $. 
To decide which one is the most relevant requires returning to a detailed analysis of solid friction: 
The stick phases actually produce force signals that are exactly linearly increasing~; 
For the slip phase, while they can be described in first approximation as an abrupt linear decrease, detailed analysis indicates that they actually consist of arches of sinusoidal functions that connect the stick phases. 
Therefore, it can be considered that the expert-driven signals (red) better fit the experimental data, at the price though of concatenating a series of short linear segments that are irrelevant with respect to the underlying physics, whereas the data-driven signals (blue) yield more stylized piecewise linear approximations of the data, that may however better capture the times of transitions between stick and slip phases, an information of premier importance to analyze solid friction regimes.

In sums, deciding between the use of expert versus automated tuning of the hyperparameters combines several issues ranging from feasibility (expert tuning is time consuming, prone to errors and may lack reproducibility) to relevance (denoised signals must permit relevant access to quantity of interest for the physics). \\

\subsection{Porous media multiphase flow: Piecewise homogeneous texture segmentation with weighted isotropic TV.}
\label{ss:textseg}

\noindent \textbf{Context} -- Understanding and predicting the dynamics of multiphase flows is a major issue in geosciences (soil decontamination, CO$_2$ sequestration) and in the industry (enhanced oil recovery, heterogeneous catalysis) \cite{reddy2001effects,Kang05,hessel05catalysis,Babchin08}. 
Among these processes, many involve a joint gas and liquid flow through a porous medium. 
Quantifying the contact areas between the different phases, where chemical reactions take place, is of tremendous importance for analyzing and predicting the efficiency of such processes \cite{kreutzer05chemreaction}. 
However, even when direct visualization is possible, the porous medium generates a global, multiscale texture on images which makes it difficult to extract the gas-liquid interfaces. 
Segmentation techniques based on morphological tools used so far to differentiate phases in multiphase flows \cite{serres2016stability} present severe limitations: arbitrary threshold setting, non-physical irregular bubble contour, non detection of small bubbles. 
In addition, recent developments in high-resolution and high-speed imaging yield large-size images and large data sets, thus bringing forward issues in memory and computational costs.
Here, we focus on the identification of the different phases (liquid and gas) in textured images.
As a first approximation, the liquid and the gas appear as homogeneous fractal textures.
Hence, discriminating phases requires to solve a texture segmentation problem. \\

\noindent \textbf{Data} --  Experiments of joint gas and liquid flow through a porous medium were performed in a quasi-2D vertical Hele-Shaw cell of width 210~mm, height 410~mm and gap 1.75~mm (see Figure~1 in \cite{Busser20}). The porous medium is an open cell solid foam of NiCrFeAl alloy (Alantum), with a typical pore diameter of 580~$\mu$m. Constant gas and liquid flow rates are injected at the bottom of the cell through nine injectors (air) and a homogeneous slit (water). Images of the multiphase flow are acquired by a high-resolution camera (Basler A2040-90um, $2048 \times 2048$~pixels + 16~mm lens) at 100~Hz  \cite{Serres_M_2017_PhD,Busser20}. 
After cropping the region of interest, the size of the images to analyze is $1626\times1160$. 
An example is provided in Figure~\ref{fig:txt_seg}(a), showing that the gas phase (dark gray or white structures) is textured because of the presence of the foam struts which are not captured by the camera resolution. 
The liquid phase (in gray) is also textured though at smaller scales, as can be observed in Figure~\ref{fig:txt_seg_zoom}(a).
For all experimental data sets, 50 to 3000 images are recorded. Similarly to the friction experiment, a large number of data sets associated with different parameters (here the gas and liquid flow rate) are investigated, to analyze the different hydrodynamic regimes.\\

\noindent \textbf{Fractal features} -- We consider \textit{fractal}, or \textit{scale-free}, features,  consisting of the local behavior as functions of scales of the wavelet \textit{leader} coefficients $\mathcal{L}_{j,\underline{n}}$ and scale $j\in\{1,\ldots, J\}$, built as a local supremum of wavelet coefficients~\cite{wendt2007bootstrap,Wendt2009b}.
For each pixel $\underline{n} \in \Omega = \left\lbrace 1, \hdots, N_1\right\rbrace \times\left\lbrace 1, \hdots, N_2\right\rbrace$, the \textit{leader} coefficients of the image $X$ to analyze, denoted $ \mathcal{L}_{j,\underline{n}}$, evidence the following local scaling property~\cite{jaffard2004wavelet}
\begin{align}
\label{eq:Ljn}
 \mathcal{L}_{j,\underline{n}}\sim \eta_{\underline{n}} 2^{jh_{\underline{n}}}, \quad  \text{as} \, 2^j \rightarrow 0
\end{align}
where $2^j$ denotes the scale of the multiscale transform.
The quantity $h_{\underline{n}}$ measures the \textit{local regularity} of the texture at pixel $\underline{n}$.
In log-log coordinates, Equation~\eqref{eq:Ljn} corresponds to a linear behavior through octaves $j$
\begin{align}
\label{eq:ljn}
\log_2(\mathcal{L}_{j,\underline{n}}) \simeq \log_2(\eta_{\underline{n}}) + j h_{\underline{n}} ,\quad \text{as} \, 2^j \rightarrow 0.
\end{align}
Setting $v_{\underline{n}} := \log_2(\eta_{\underline{n}})$, which will be called in the following the \textit{local power} of the texture, a texture $X$ is characterized by $\left( h_{\underline{n}}, v_{\underline{n}} \right)_{\underline{n} \in \Omega}$.
\begin{definition}
An \textit{homogeneous} texture is characterized by a uniform local regularity $h_{\underline{n}} \equiv H$ and local power $v_{\underline{n}} \equiv V$.
\end{definition}
Then, texture segmentation consists in identifying a partition of the image domain 
\begin{align}
\Omega = \Omega_1 \cup \cdots \cup \Omega_Q, \quad \Omega_q \cap \Omega_{q'} = \emptyset \text{ for } q \neq q',
\end{align}
for which both $h_{\underline{n}}$ and $v_{\underline{n}}$ are uniform on each $\Omega_q$.
In other words, it consists in obtaining piecewise constant maps of local regularity and local power.\\

\noindent \textbf{Regularized estimates} -- Linear regression on log-\textit{leaders}~\eqref{eq:ljn} can be formulated as the minimization of the following least-squares
\begin{align}
\label{eq:defF}
\Phi( h, v ; \mathcal{L}) = \frac{1}{2} \sum_{j = j_1}^{j_2}  \left\lVert  j  h + v - \log_2  \mathcal{L}_{j}\right\rVert^2,
\end{align}
and provides estimates $\left(\widehat{h}_{\mathrm{LR}}, \widehat{v}_{\mathrm{LR}}\right)$ of fractal features.
As an example, the linear regression estimate of the local regularity of the (zoomed) flow image of Figure~\ref{fig:txt_seg}(a) (Figure~\ref{fig:txt_seg_zoom}(a)) is presented in Figure~\ref{fig:txt_seg}(b) (Figure~\ref{fig:txt_seg_zoom}(b)).
These estimates turn out to suffer from large variances precluding their use of actual segmentation,
thus calling for nonlinear estimation tools.\\
To favor piecewise homogeneous segmentation, we enforce piecewise constancy in estimated features via two different Total Variation-based penalizations, leading to the minimization of the \textit{Joint} and the \textit{Coupled} functionals
\begin{align}
\label{eq:penal}
\left(\widehat{h}^{\mathrm{J/C}}, \widehat{v}^{\mathrm{J/C}}\right) = \underset{h, v}{\arg\min} \, \,  \Phi(h,v ; \mathcal{L}) + \lambda \Psi_{\mathrm{J/C}}(h,v ; \alpha).
\end{align}
The \textit{Joint} and \textit{Coupled} penalizations are defined as
\begin{align}
\label{eq:penj}\Psi_{\mathrm{J}}(h,v) &:= \lambda \left( \alpha \mathrm{TV}(h) + \mathrm{TV}(v) \right), \\\
\label{eq:penc}\Psi_{\mathrm{C}}(h,v) &:= \lambda  \sum_{n_1 = 1}^{N_1-1} \sum_{n_2 = 1}^{N_2-1} \sqrt{\alpha^2 \left(Hh\right)_{n_1,n_2}^2 +\alpha^2 \left(Vh\right)_{n_1,n_2} ^2 +  \left(Hv\right)_{n_1,n_2}^2 +\left(Vv\right)_{n_1,n_2} ^2 } . 
\end{align}
where the total variation (TV) is defined in Equation~\eqref{eq:l12_2D} and the horizontal and vertical discrete gradients, $H$ and $V$, are defined at Equation~\eqref{eq:diff_2D}.
While the \textit{Joint} penalization imposes independently piecewise constancy of local regularity $h$ and local power $v$, the \textit{Coupled} penalization is more restrictive and favors co-localized changes in $h$ and $v$.
The trade-off between fidelity to the mathematical model~\eqref{eq:ljn} and piecewise constancy of $h$ and $v$ is controlled by the regularization parameter $\lambda > 0$ and $\alpha  > 0$.\\

\begin{figure}[h!]
\centering
\begin{tabular}{cc}
(a)~Flow image & (b)~Linear regression \\
\includegraphics[width = 3cm]{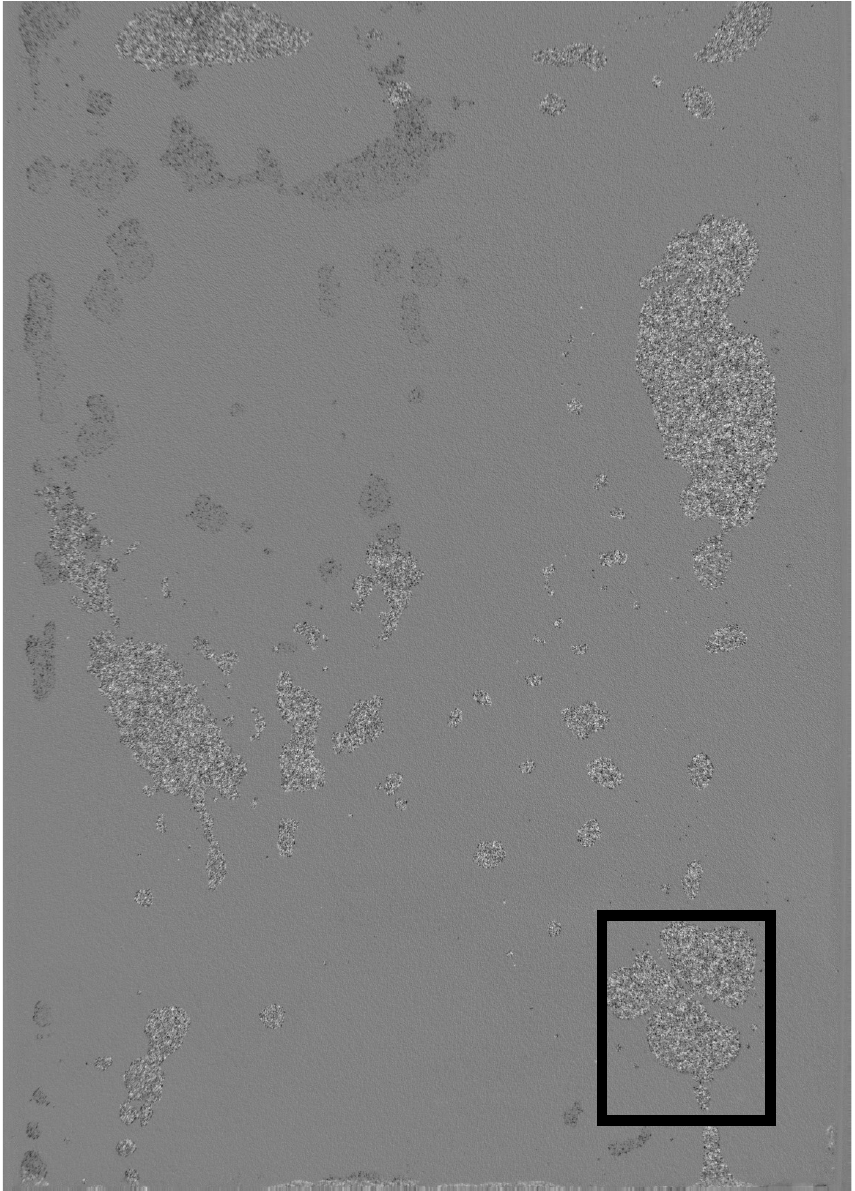}  &  \includegraphics[width = 3cm]{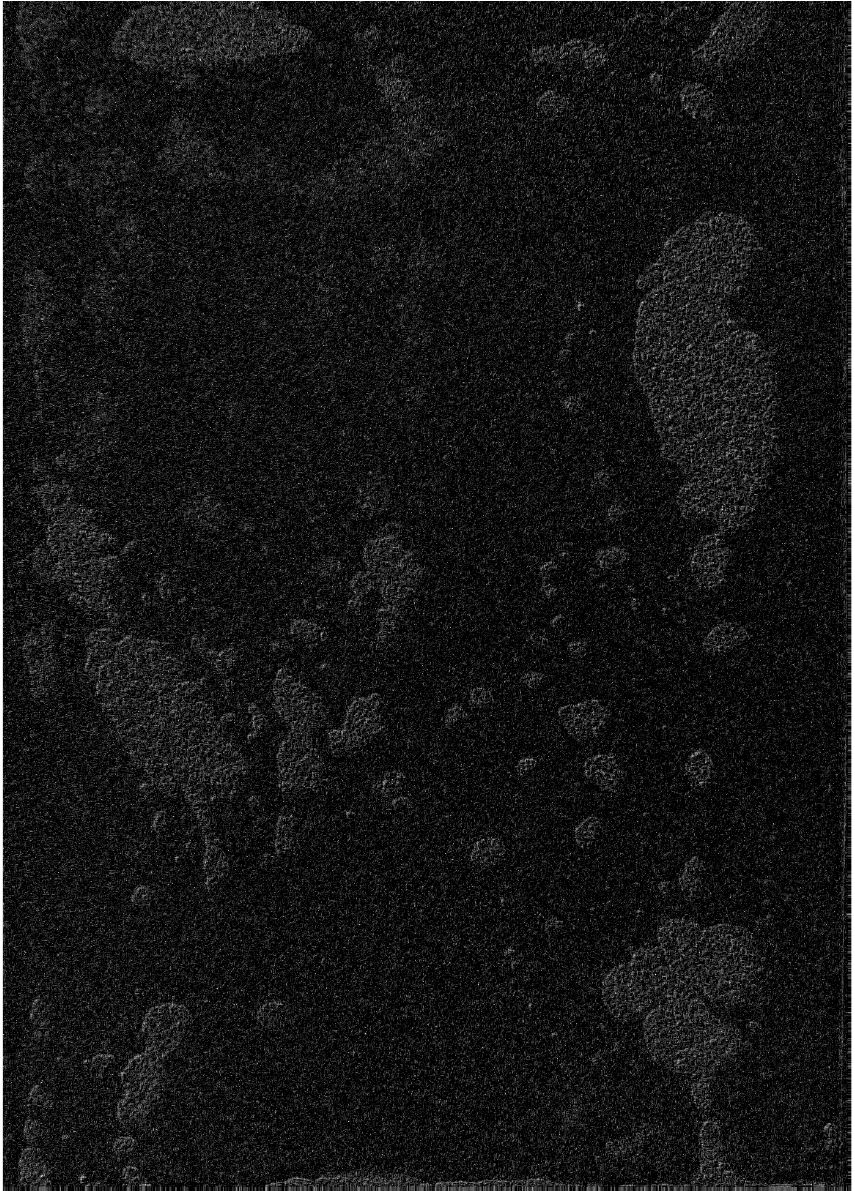}\\
(c)~T-ROF-$\mathrm{Id}$ & (d)~ROF-$\mathrm{Id}$ \\
\includegraphics[width = 3cm]{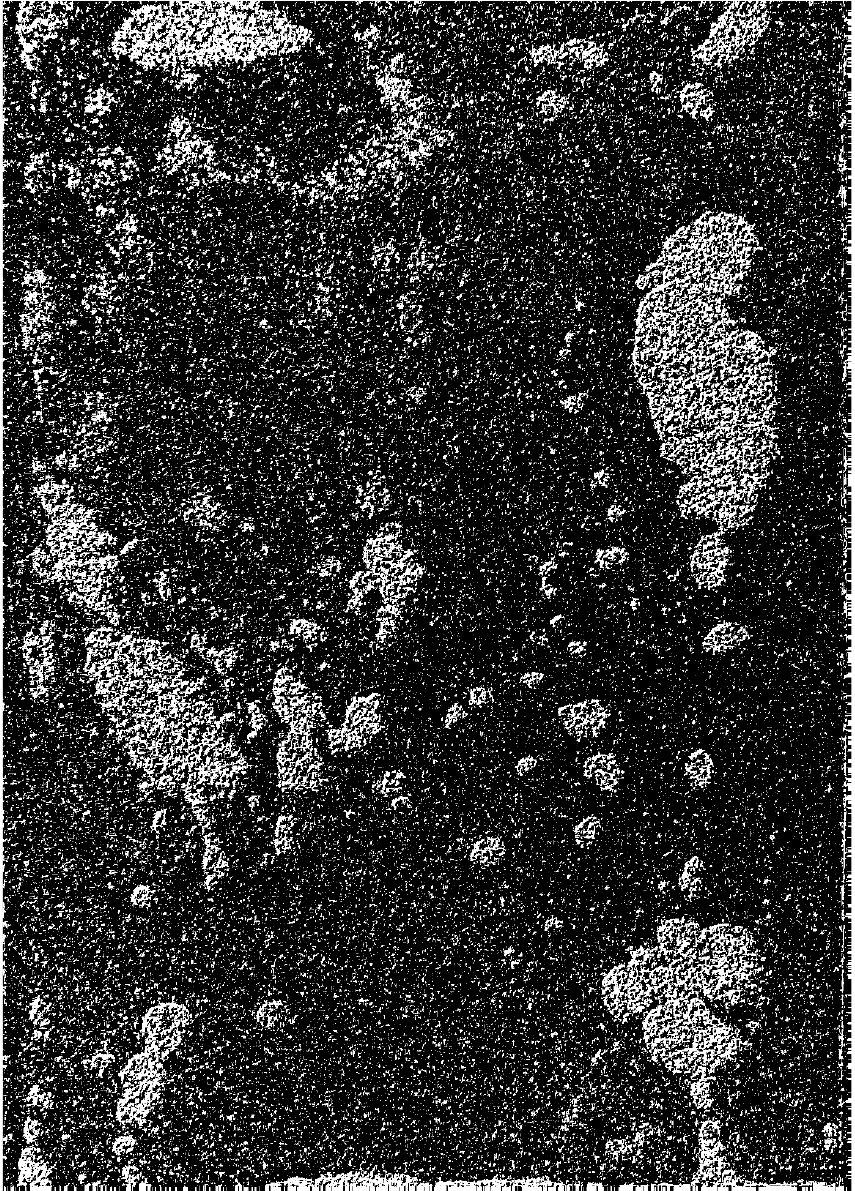} &  \includegraphics[width = 3cm]{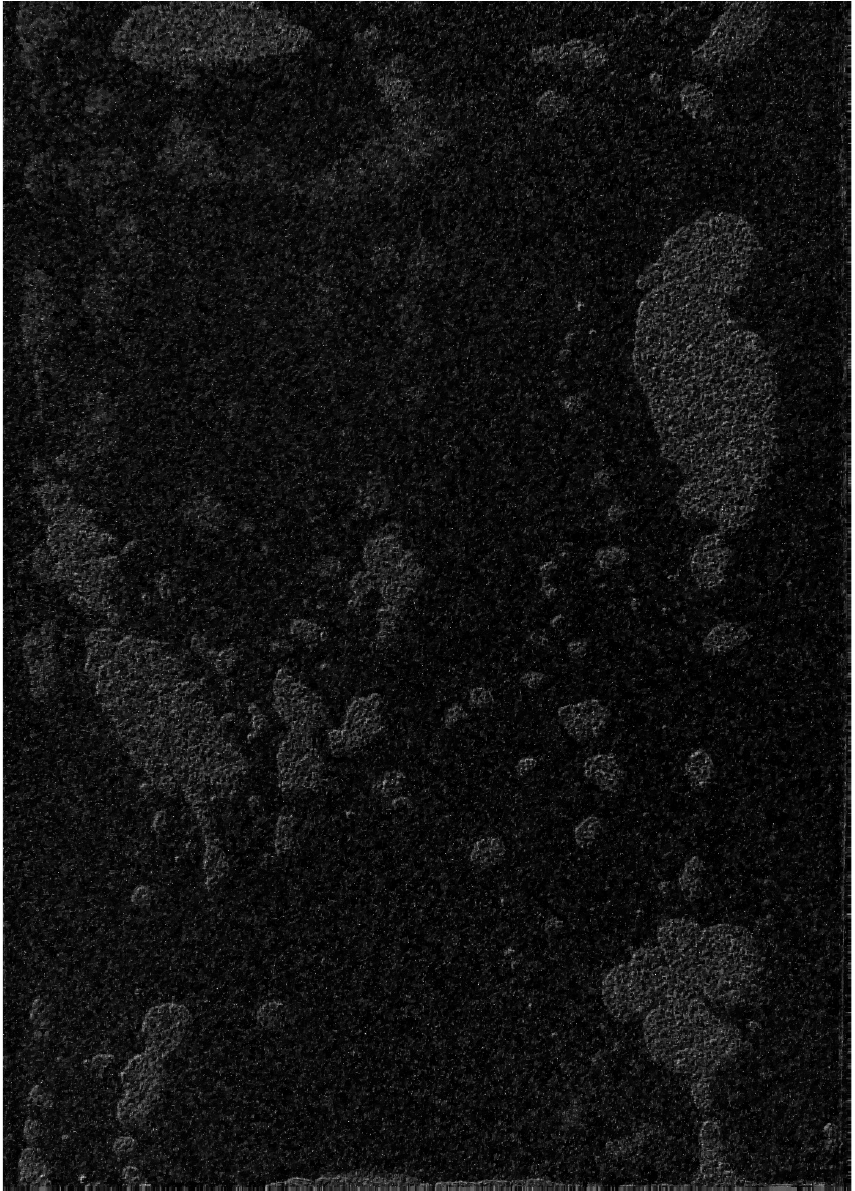} \\
(e)~T-ROF-$\mathcal{S}$ & (f)~ROF-$\mathcal{S}$\\
 \includegraphics[width = 3cm]{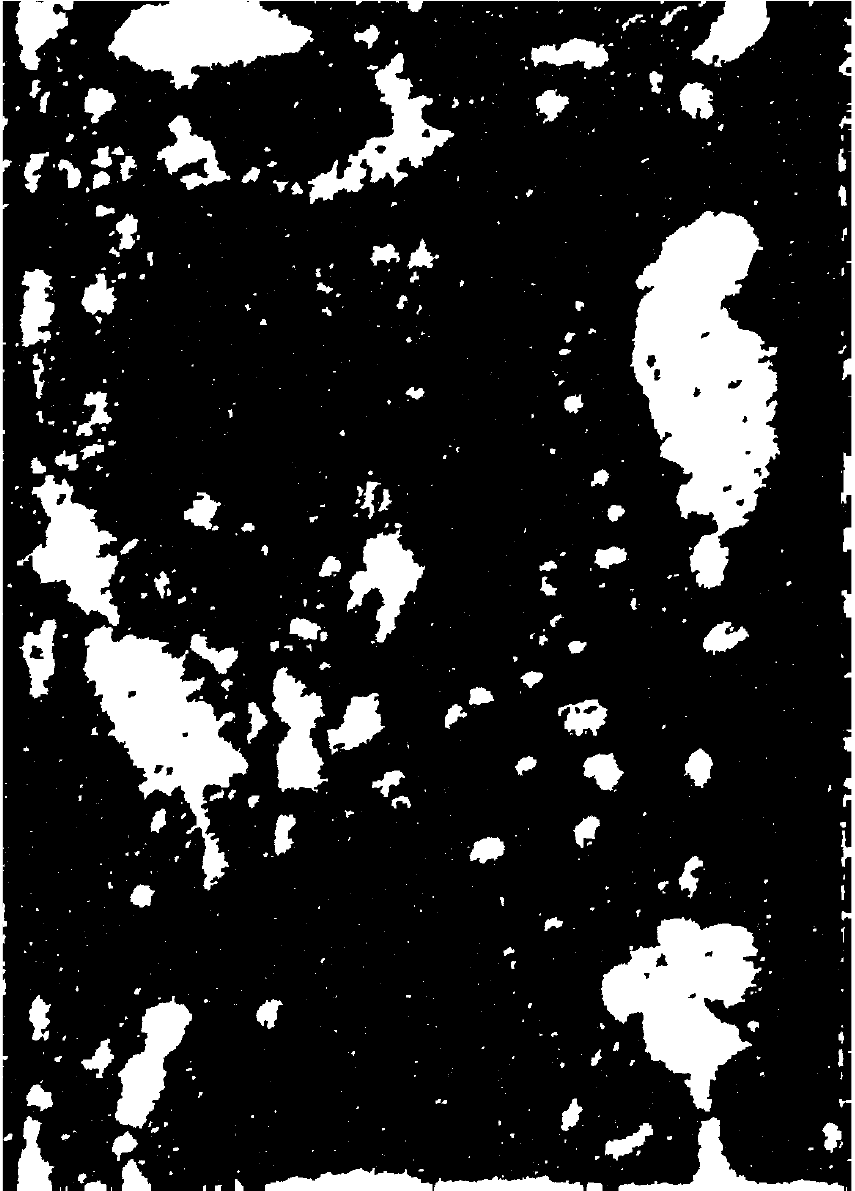} & \includegraphics[width = 3cm]{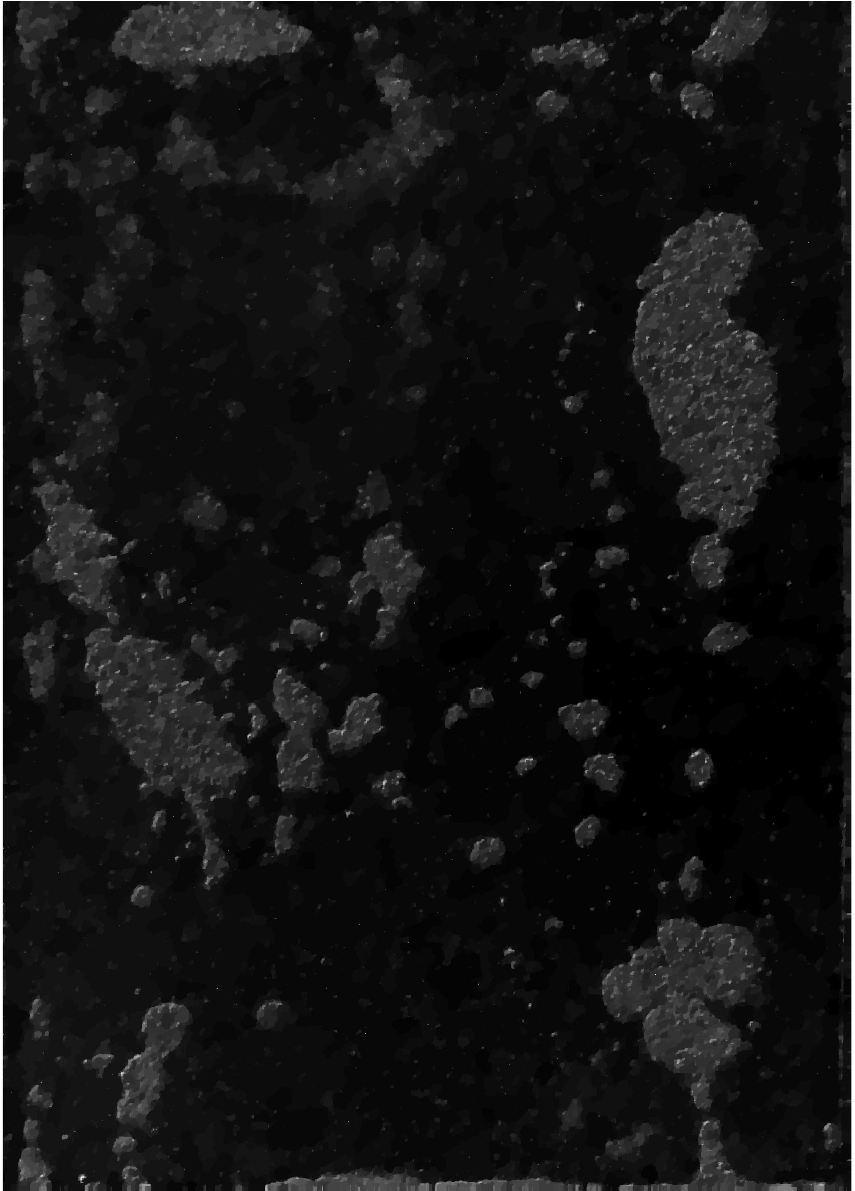} \\
(g)~T-\textit{Joint} & (h)~\textit{Joint} \\
\includegraphics[width = 3cm]{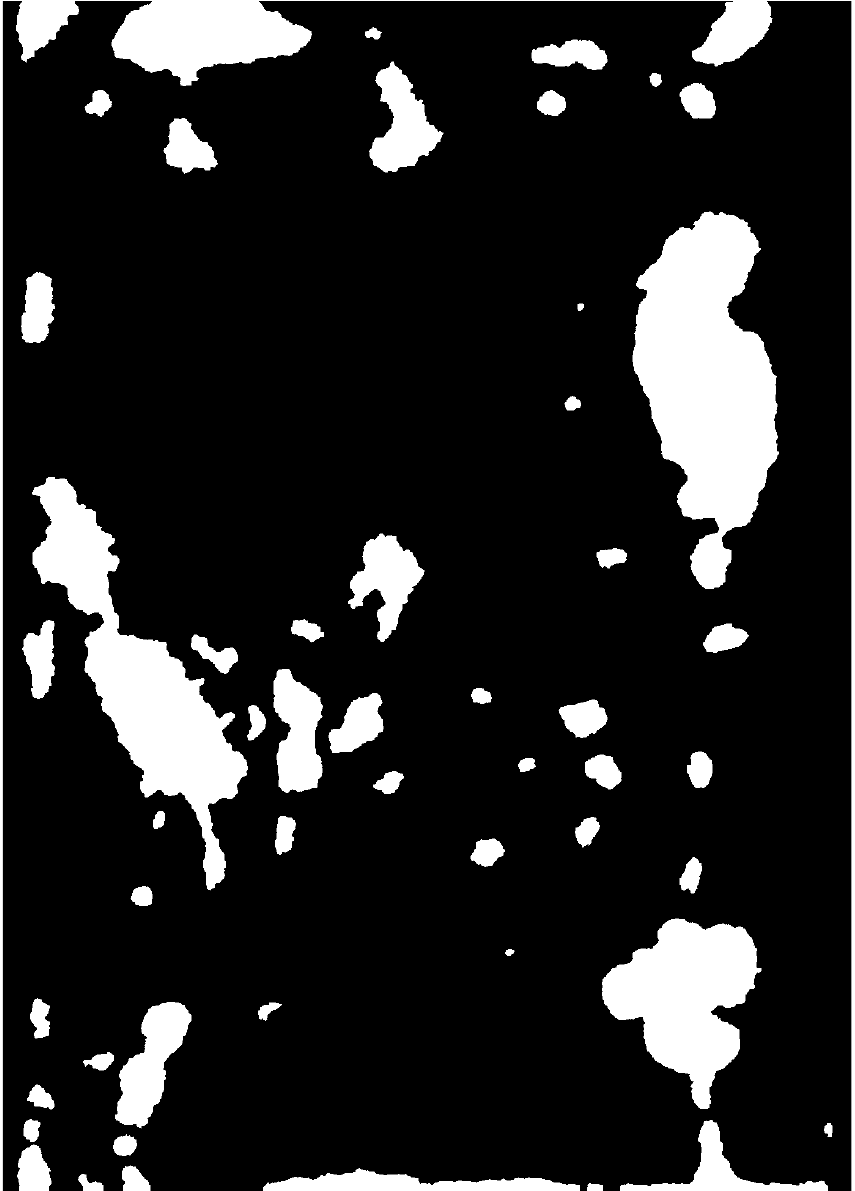} & \includegraphics[width = 3cm]{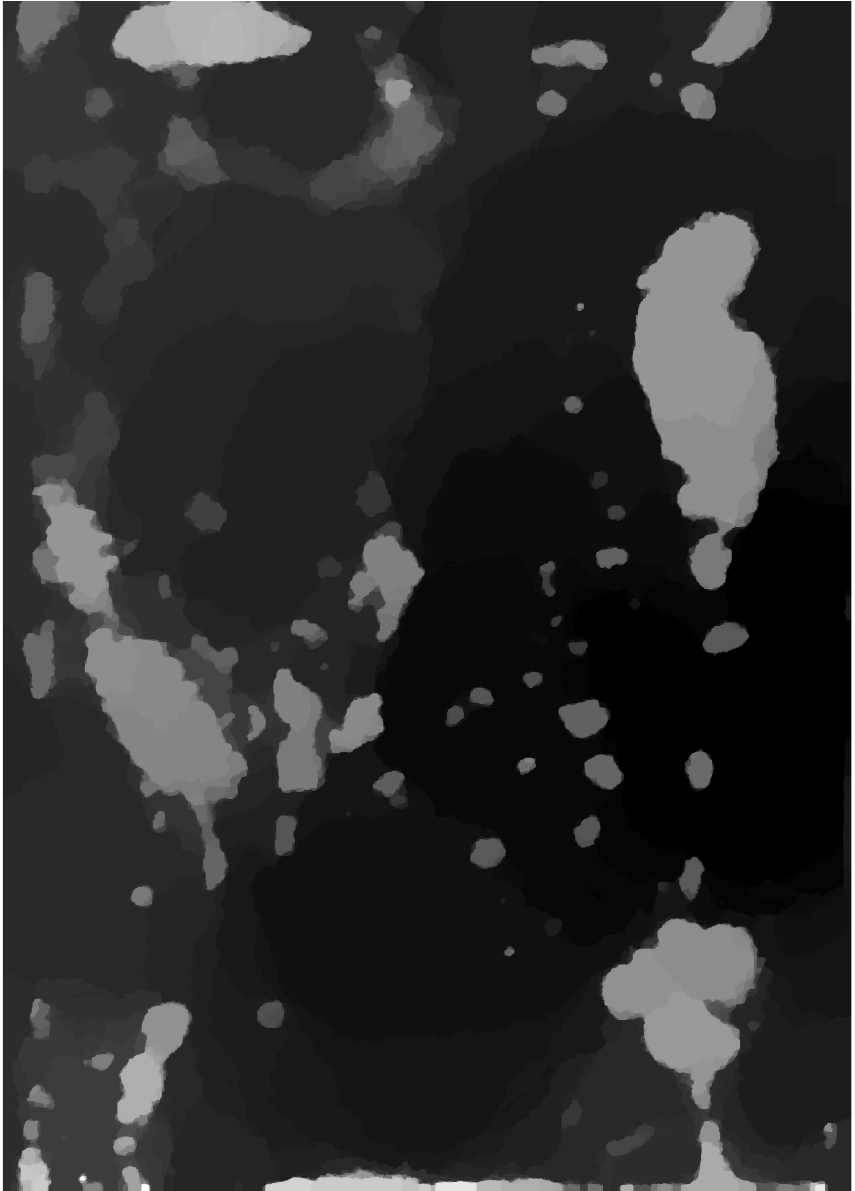} \\
(i)~T-\textit{Coupled} & (j)~\textit{Coupled} \\
\includegraphics[width = 3cm]{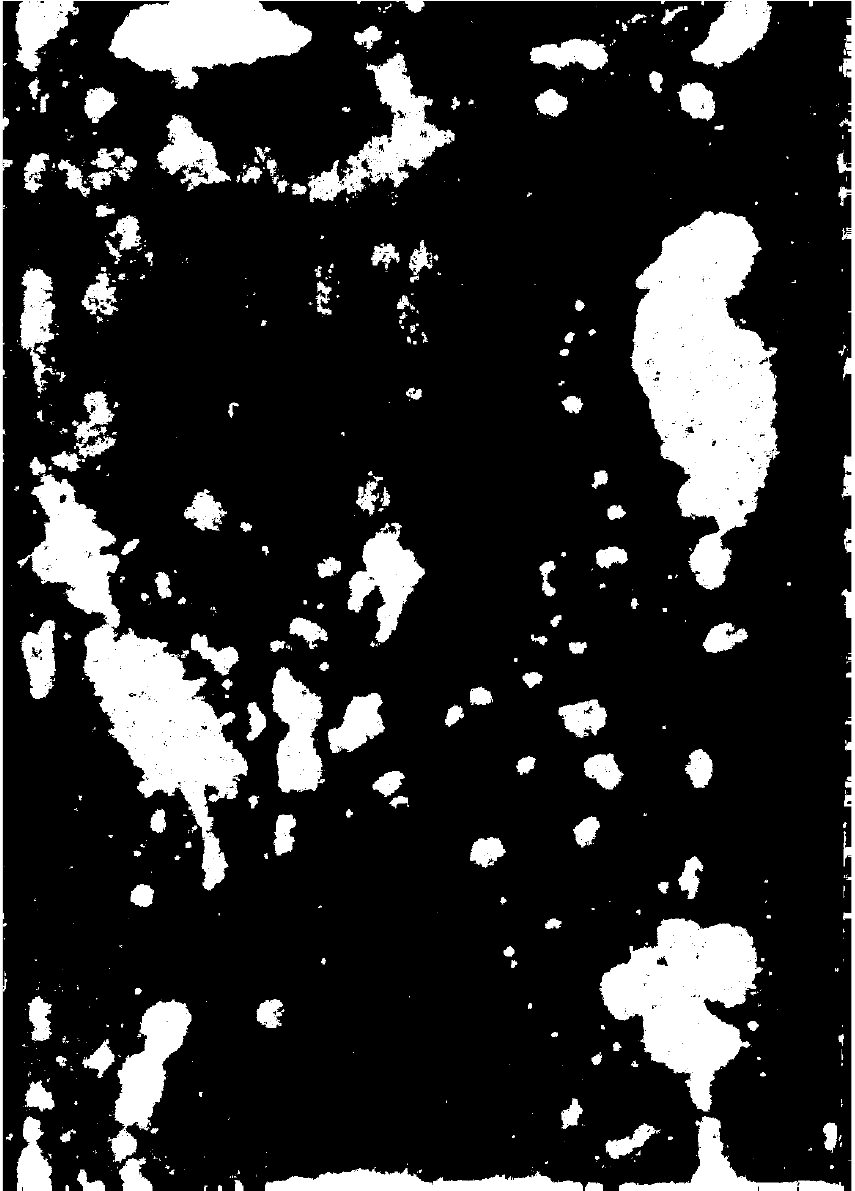} & \includegraphics[width = 3cm]{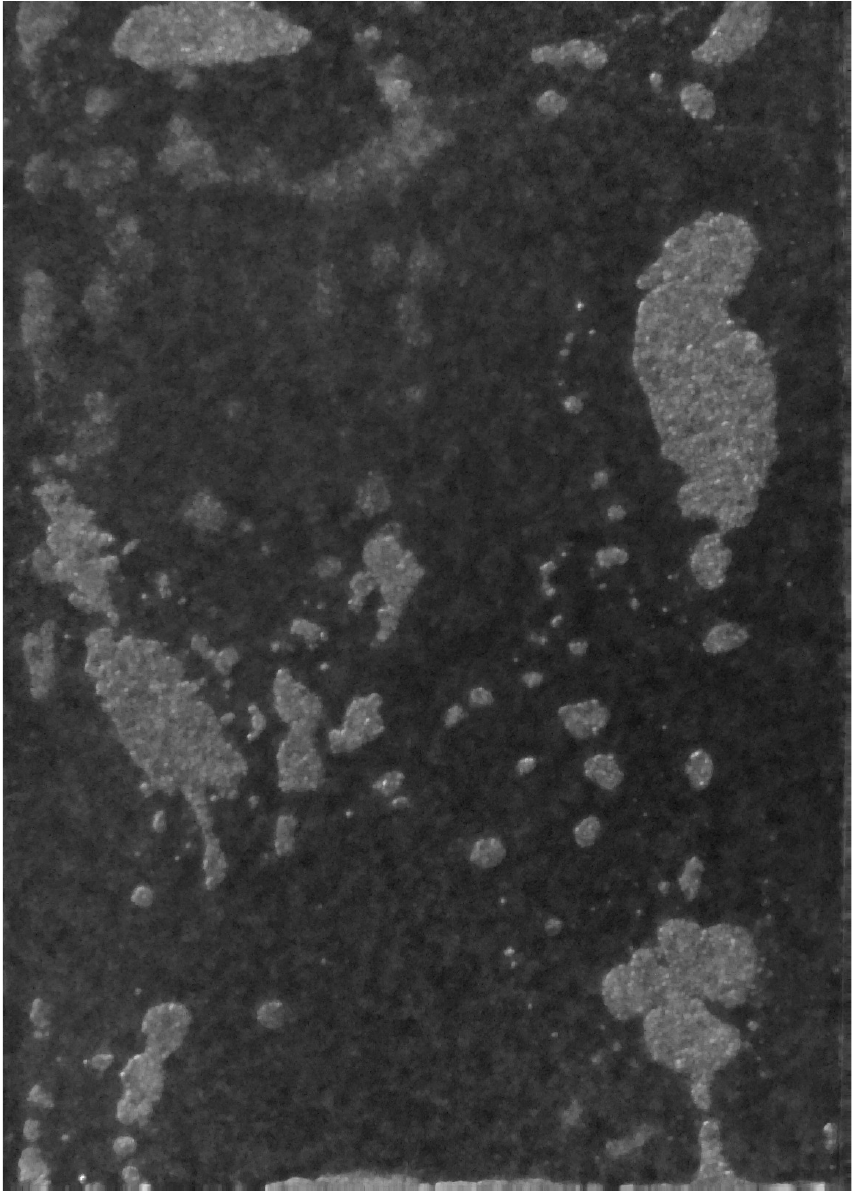}
\end{tabular}
\caption{\label{fig:txt_seg}{\bf Porous media multiphase flow texture segmentation based on fractal features.} Comparisons between different approaches as summarized in Table~\ref{tab:meths}.}
\end{figure}

\begin{figure}[h!]
\centering
\begin{tabular}{cc}
(a)~Zoomed flow image & (b)~Linear regression \\
\includegraphics[width = 3cm]{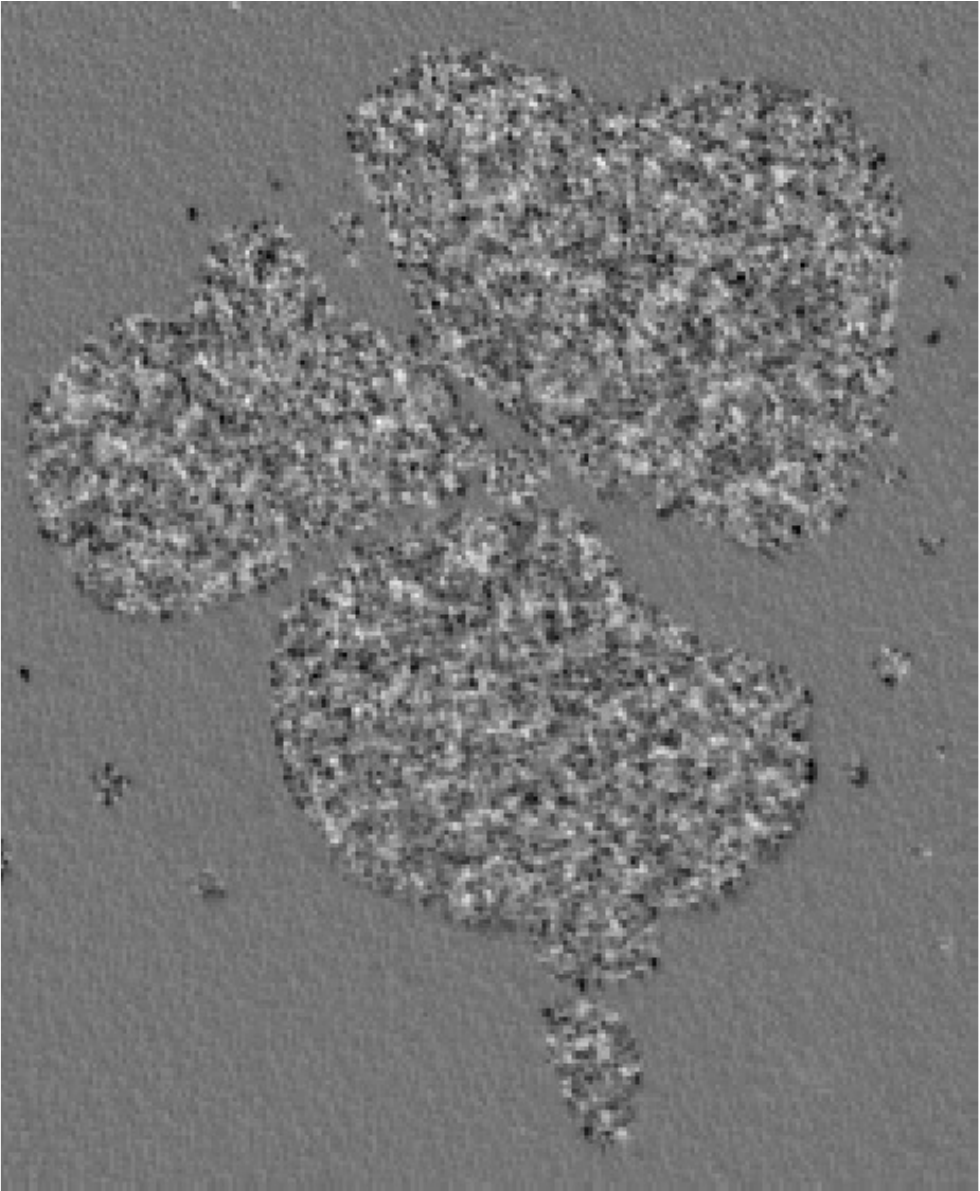}  &  \includegraphics[width = 3cm]{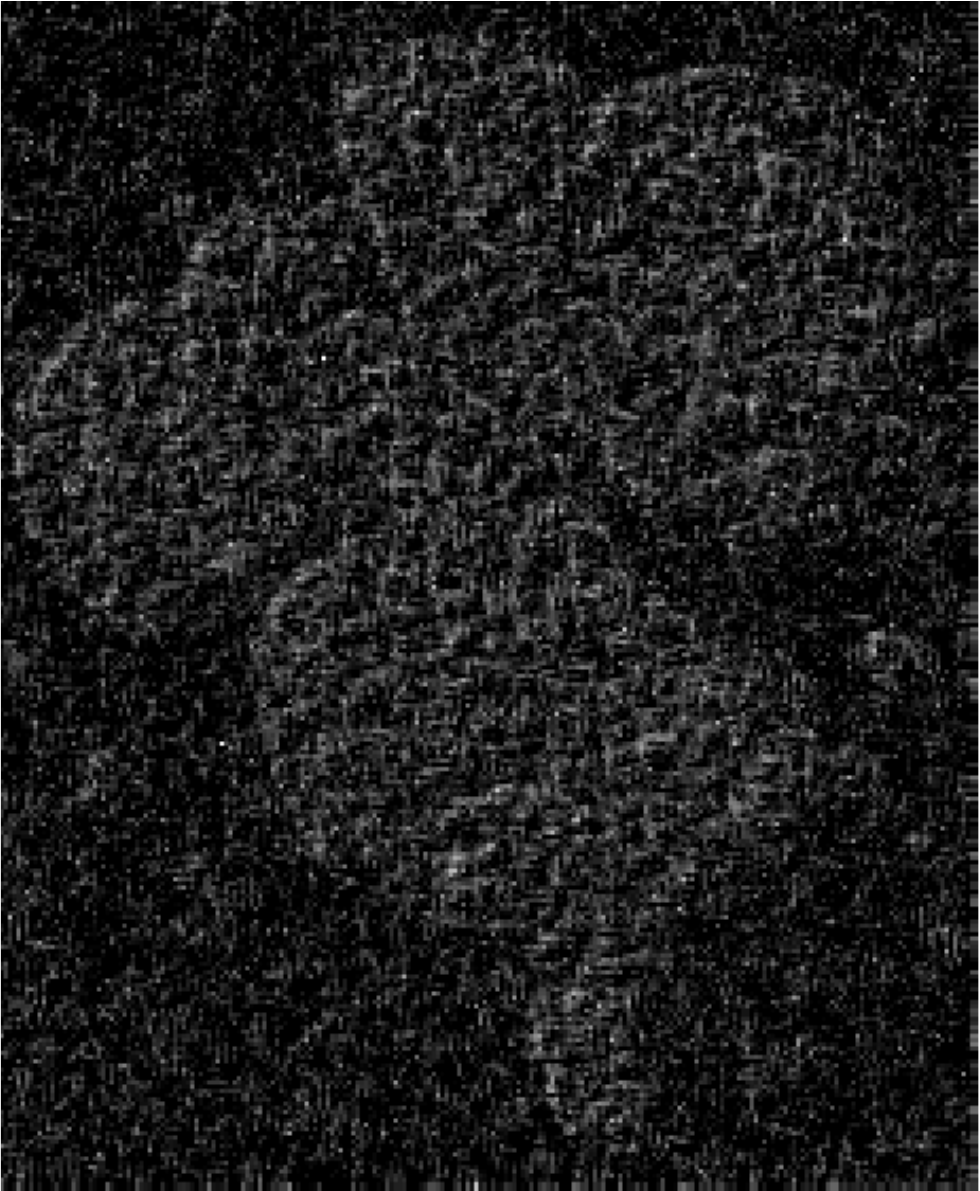}\\
(c)~T-ROF-$\mathrm{Id}$ & (d)~ROF-$\mathrm{Id}$ \\
\includegraphics[width = 3cm]{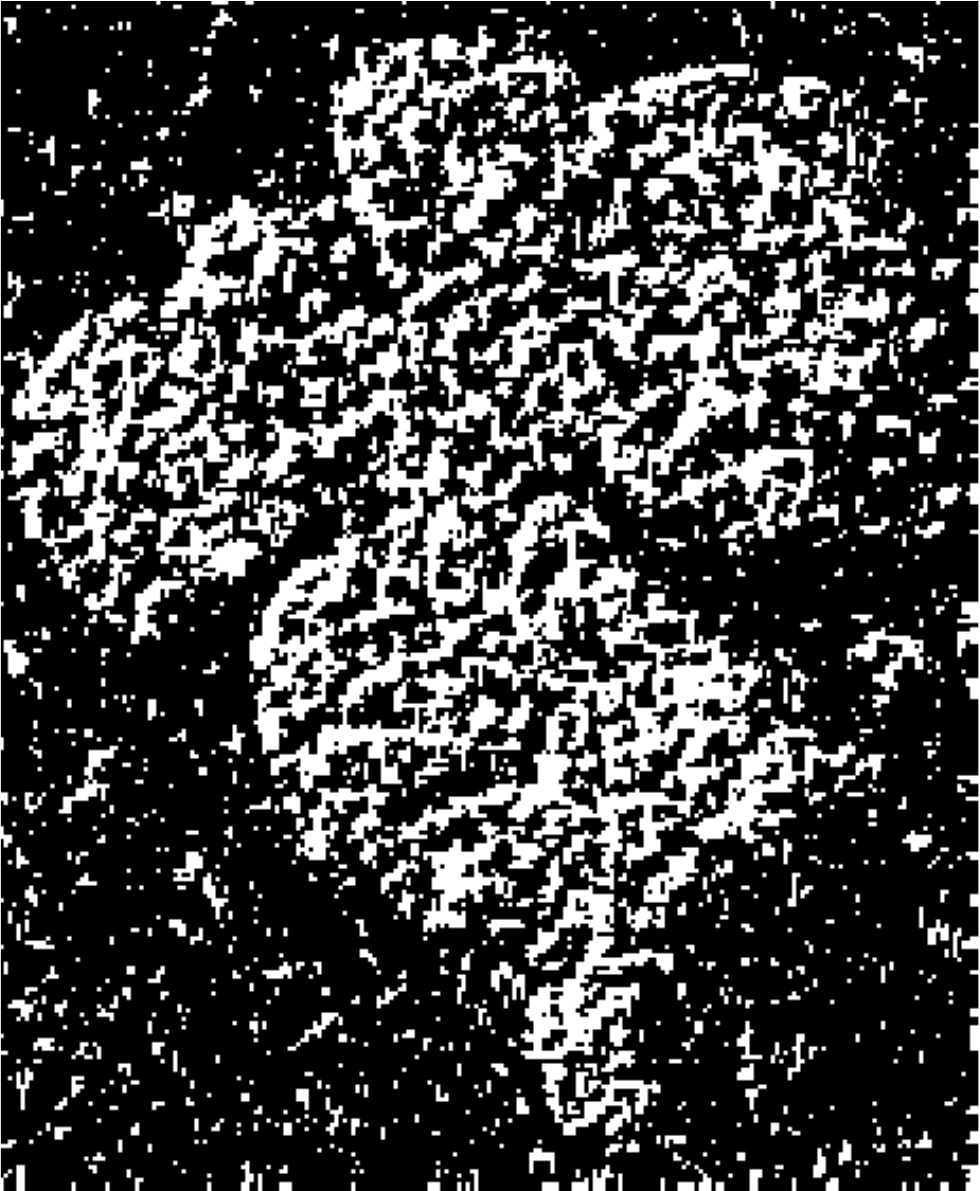} &  \includegraphics[width = 3cm]{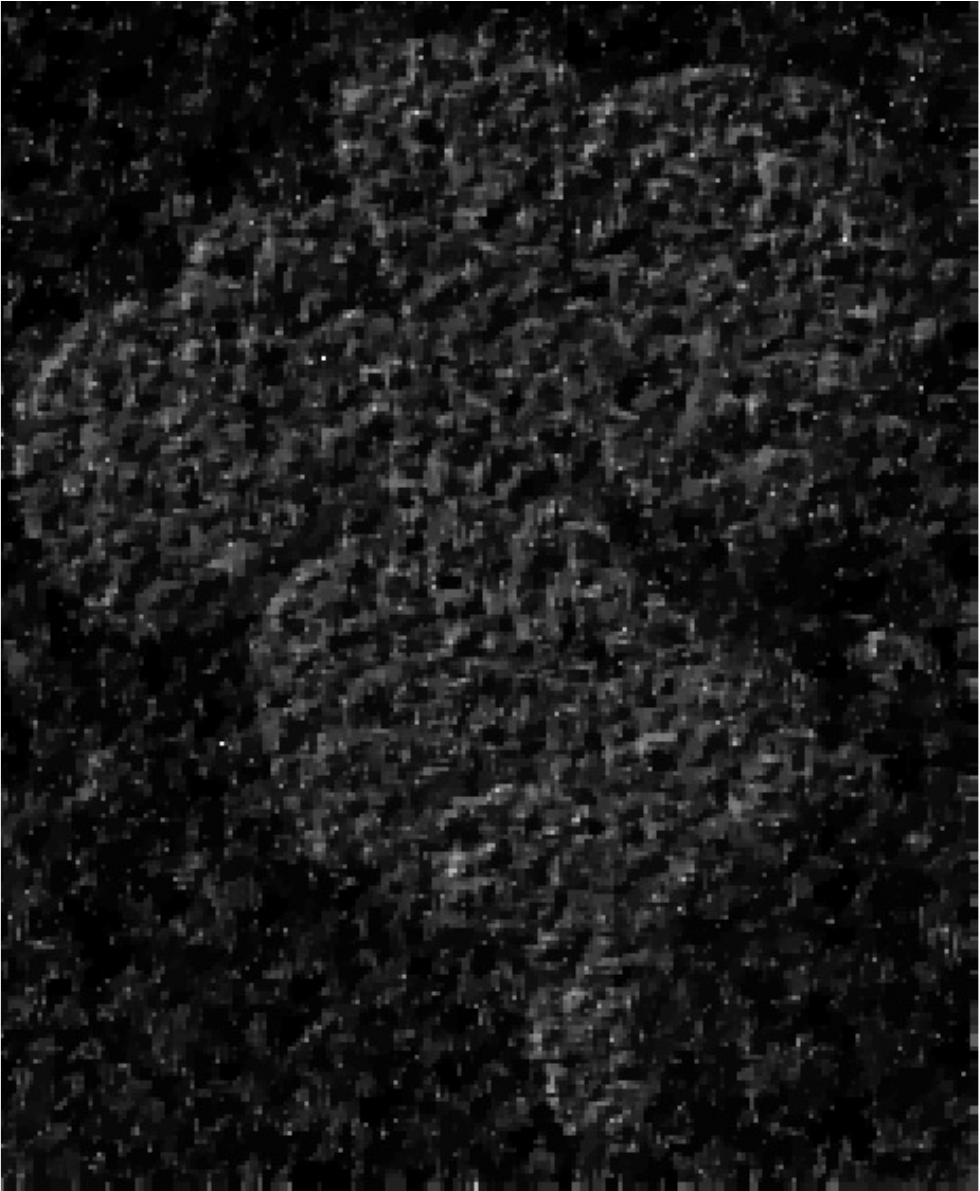} \\
(e)~T-ROF-$\mathcal{S}$ & (f)~ROF-$\mathcal{S}$\\
 \includegraphics[width = 3cm]{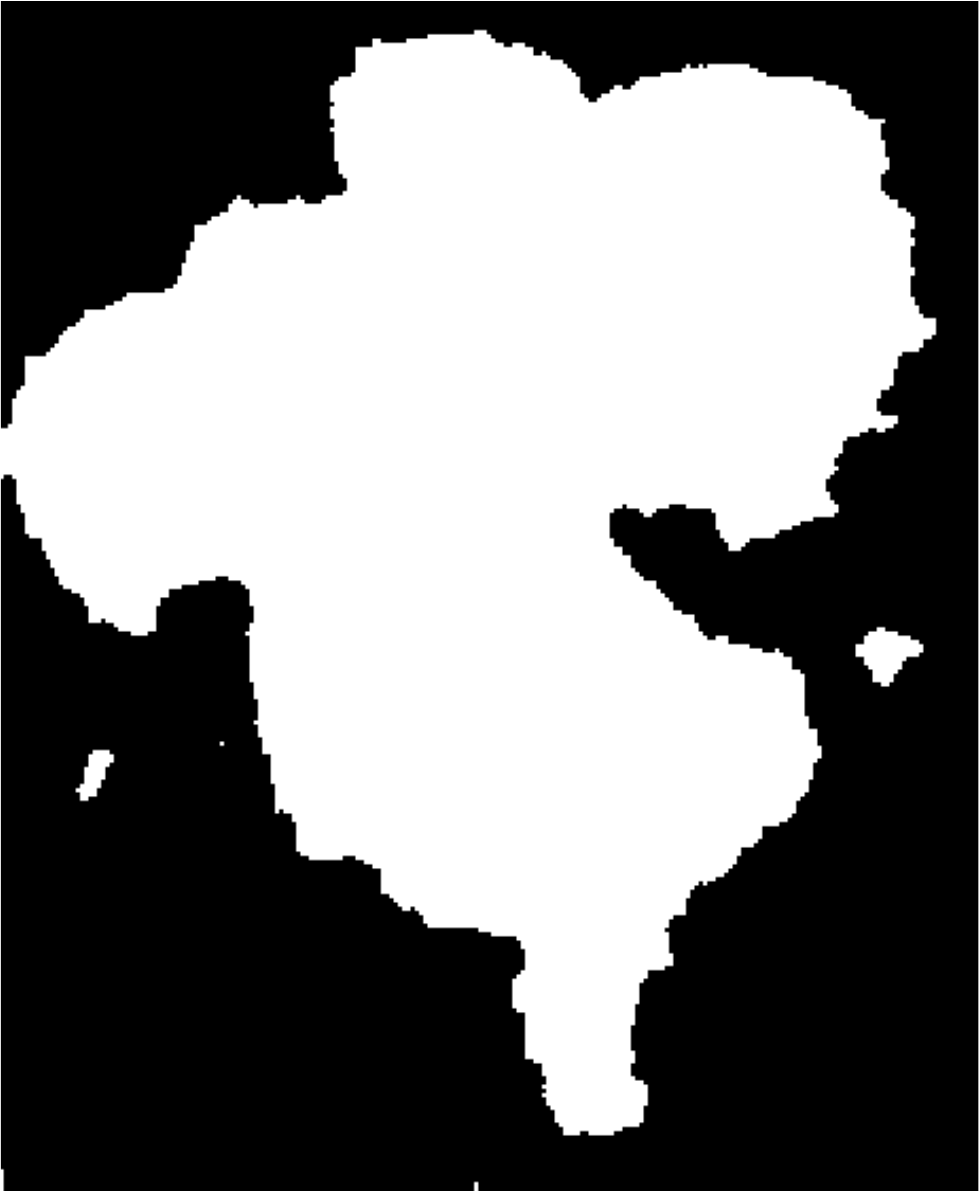} & \includegraphics[width = 3cm]{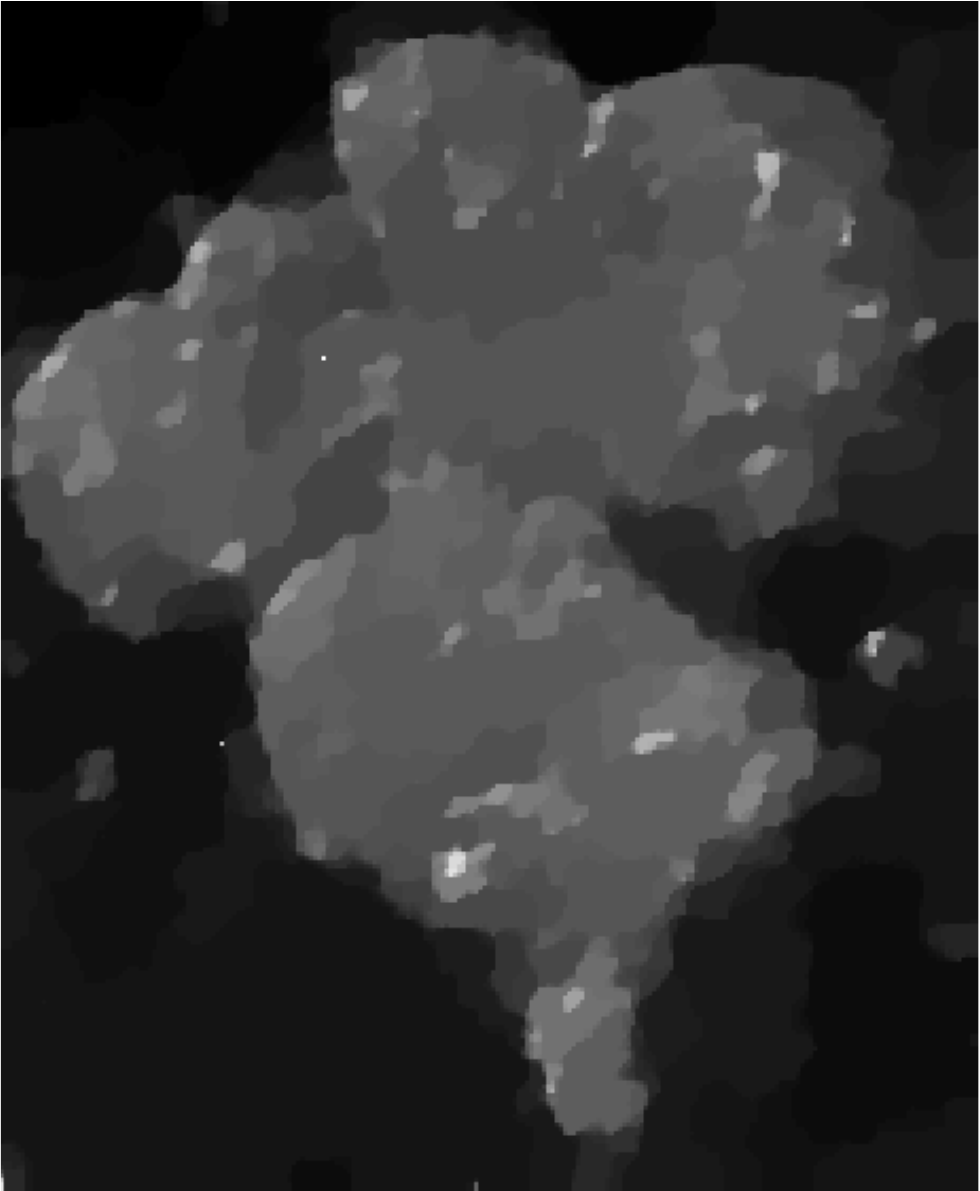} \\
(g)~T-\textit{Joint} & (h)~\textit{Joint} \\
\includegraphics[width = 3cm]{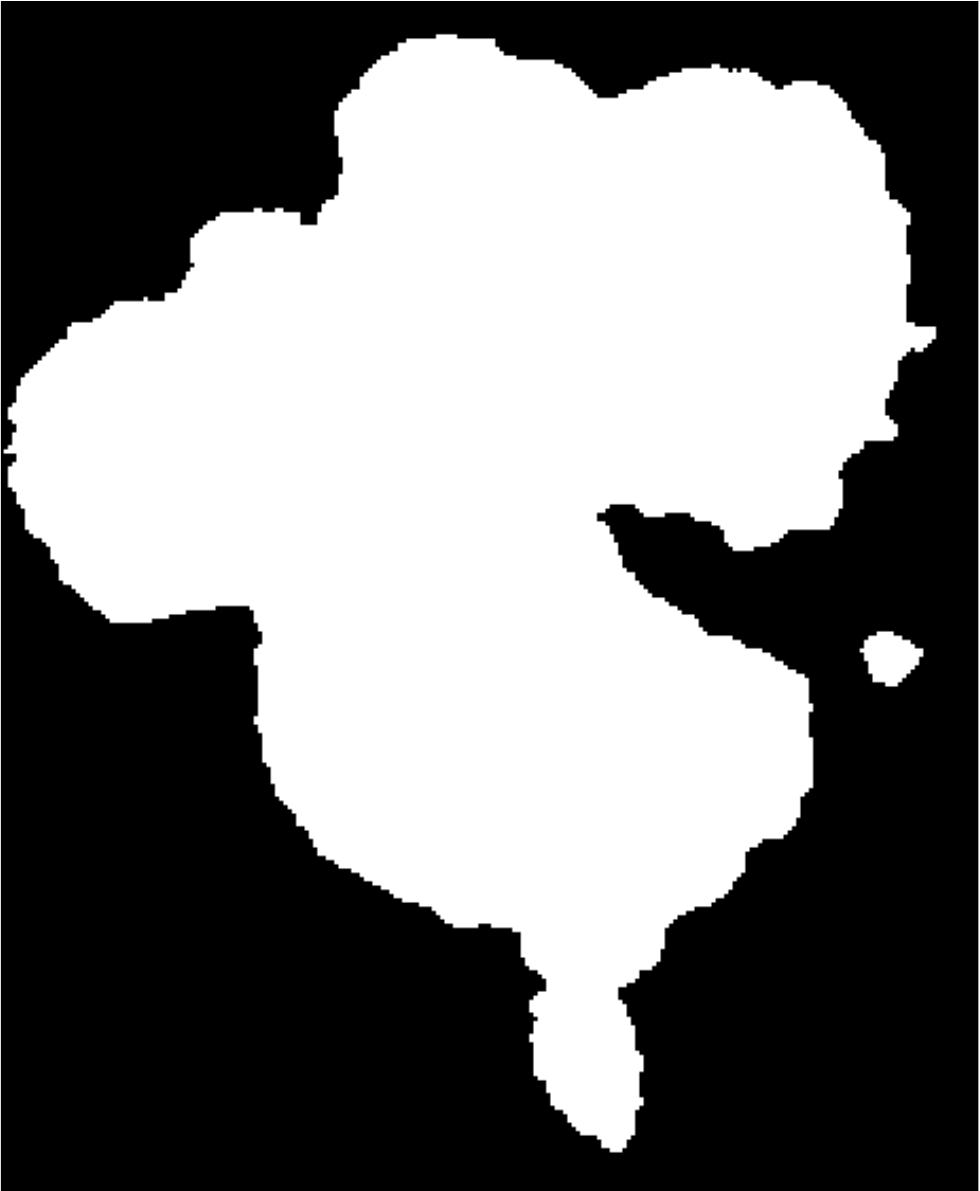} & \includegraphics[width = 3cm]{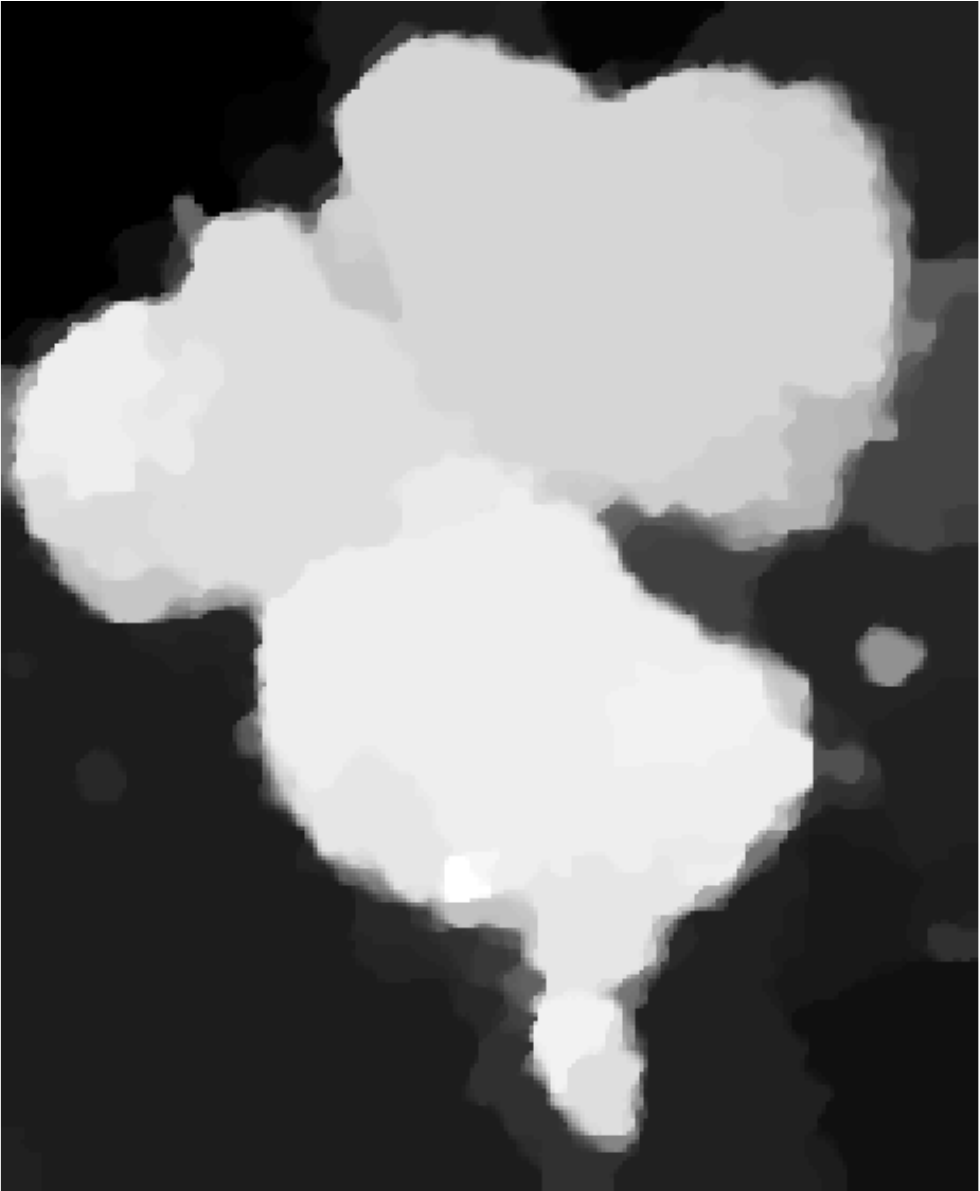} \\
(i)~T-\textit{Coupled} & (j)~\textit{Coupled} \\
\includegraphics[width = 3cm]{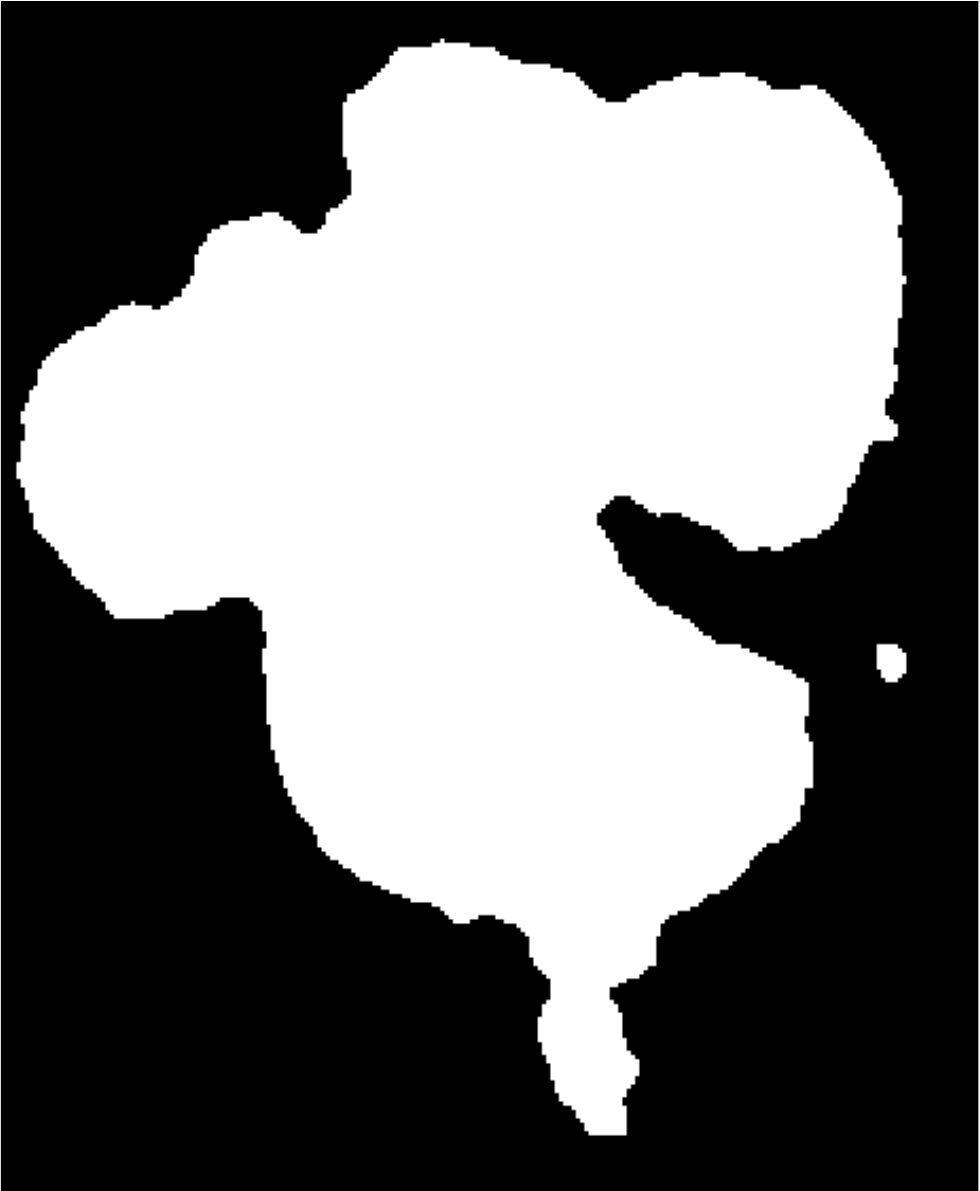} & \includegraphics[width = 3cm]{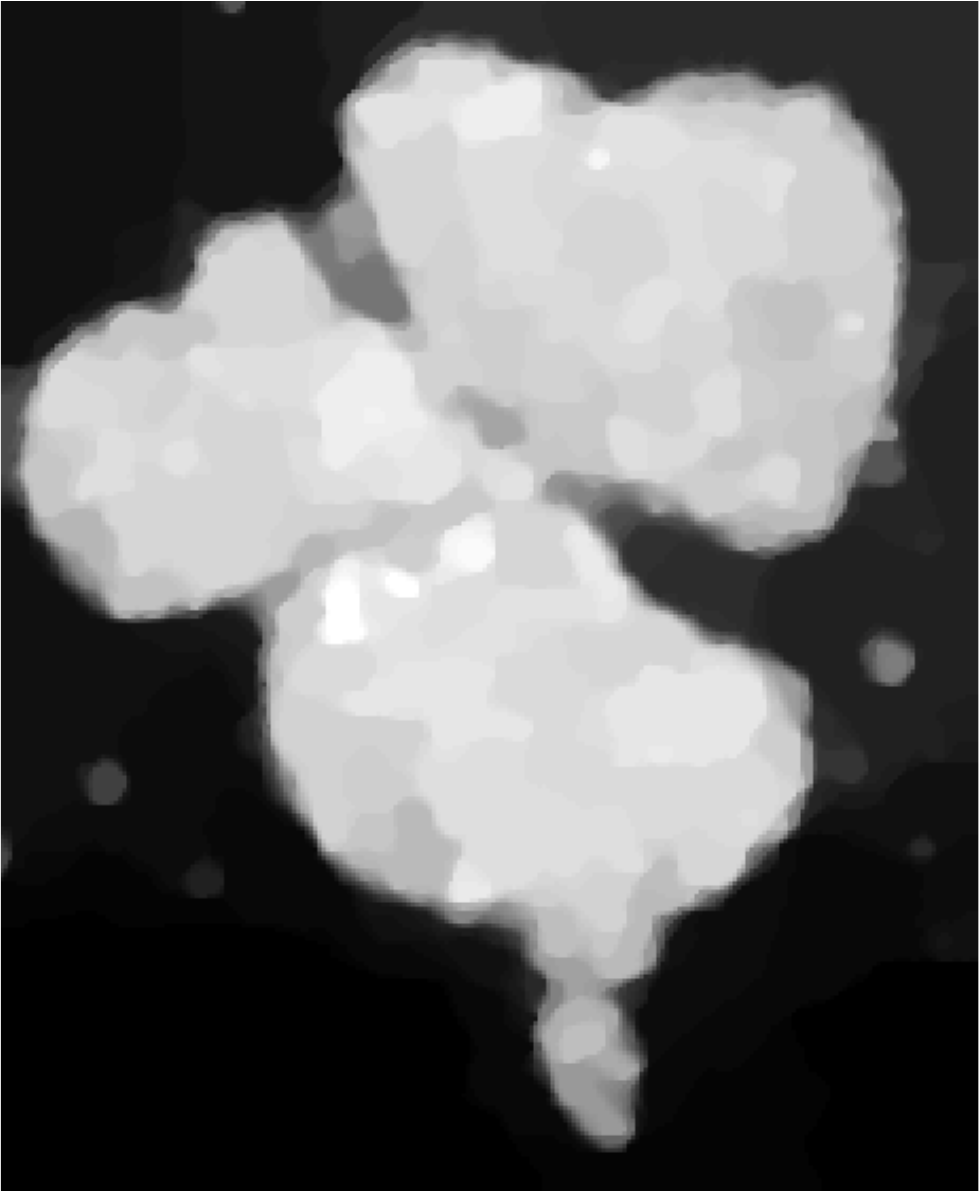}
\end{tabular}
\caption{\label{fig:txt_seg_zoom} {\bf  Porous media multiphase flow texture segmentation based on fractal features.} Comparisons between different approaches summarized in Table~\ref{tab:meths}. \textit{Zoom}  on the area marked by the black rectangle in Figure~\ref{fig:txt_seg}(a).}
\end{figure}

\noindent \textbf{Iterated thresholding} -- From the regularized estimates, $\widehat{h}_{\mathrm{J/C}}$, one can obtain a segmentation by applying a post-processing thresholding.
The iterated thresholding procedure, proposed in~\cite{cai2013multiclass,cai2018linkage}, benefiting from theoretical assessment, is customized to the gas/liquid segmentation problem in Algorithm~\ref{alg:trof}.
It is used systematically in the following, leading to the proposed T-\textit{Joint} and T-\textit{Coupled} segmentation procedures introduced in~\cite{pascal2019joint}.\\ 

\begin{algorithm}
\begin{algorithmic}
\REQUIRE  $\widehat{h} $\\
\ENSURE $\mathrm{m}_0^{[0]} = \underset{\underline{n} \in \Omega}{\min} \, \widehat{h}_{\underline{n}}$, \, 
$\mathrm{m}_1^{[0]} = \underset{\underline{n} \in \Omega}{\max} \, \widehat{h}_{\underline{n}}$.
\FOR{$t \in \mathbb{N}^*$}
\STATE \COMMENT{Compute the threshold:}
\STATE $ \mathrm{T}^{[t-1]} = \left( \mathrm{m}_0^{[t-1]} + \mathrm{m}_1^{[t-1]}\right)/2 $
\STATE \COMMENT{Threshold $\widehat{h}$:}
\STATE $\Omega_0^{[t]} = \lbrace \underline{n} \, | \, \widehat{h}_{\underline{n}} \leq \mathrm{T}^{[t]} \rbrace$, 
\, $\Omega_1^{[t]} = \lbrace \underline{n} \, | \, \widehat{h}_{\underline{n}} > \mathrm{T}^{[t]} \rbrace$ 
\STATE \COMMENT{Update region mean:}
\STATE $\displaystyle \mathrm{m}_0^{[t]} = 1/ \lvert \Omega_0 \rvert \sum_{\underline{n} \in \Omega_0}\widehat{h}_{\underline{n}}$, \,
$\displaystyle \mathrm{m}_1^{[t]} = 1/ \lvert \Omega_1 \rvert \sum_{\underline{n} \in \Omega_1}\widehat{h}_{\underline{n}}$.
\ENDFOR
\RETURN $\Omega_0 = \Omega_0^{[\infty]}$ (liquid), \, $\Omega_1 = \Omega_1^{[\infty]}$ (gas)
\end{algorithmic}
\caption{\label{alg:trof}T-ROF: iterative thresholding of $\widehat{\boldsymbol{h}}_{\mathrm{ROF}}$}
\end{algorithm}

\noindent \textbf{Compared texture segmentation procedures} --
Four procedures falling under Model~\eqref{eq:nl}  and satisfying assumptions of Theorem~\ref{thm:CP} and Theorem~\ref{th:sure} will be compared for texture segmentation, summarized in Table~\ref{tab:meths}.
Note that they differ both by the functional minimized and the noise model, which is of crucial importance in Stein procedures.

The first one, denoted ROF-$\mathrm{Id}$, is a state-of-the-art piecewise constant denoising method, applied on $\widehat{h}_{\mathrm{LR}}$ seen as an observation of $\bar{h}$ corrupted by additive i.i.d. zero-mean Gaussian noise of variance $\sigma^2$, hence with scalar covariance matrix $\mathcal{S} = \sigma^2\mathrm{Id}$.\\
The three procedures ROF-$\mathcal{S}$, \textit{Joint} and \textit{Coupled} take into account the covariance structure of the log-\textit{leaders} coefficients, evidencing both inter-scale and spatial correlations encapsulated in a non-diagonal covariance matrix $\mathcal{S}$.\\
The linear operator intervening in the data fidelity term of \textit{Joint} and \textit{Coupled} procedures, denoted $J$, acts on the double variable $(h,v)$ as $J(h,v) := \left( jh+v\right)_{j = J_1}^{j_2}$.
We showed in a previous work~\cite{pascal2019joint} that it is full-rank.
Hence, Theorem~\ref{thm:CP} applies.
Moreover, the strong-convexity modulus $\mu = 2 \min \mathrm{Sp}(J^{\top}J)$, where $\mathrm{Sp}$ denotes the spectrum of a linear operator, only depends on the octave range $\lbrace j_1, \hdots, j_2\rbrace$ and its numerical values are provided for fixed $j_1 = 1$ and varying $j_2$ in Table~\ref{tab:gamma_J}~\cite{pascal2019joint}.
\\
\setlength{\tabcolsep}{1mm}
\begin{table}[h!]
\centering
\resizebox{\linewidth}{!}{\begin{tabular}{ccccccc}
\toprule
Method & Figures & Observation  & Operator  & Variable  & Penalization & Covariance\\
& \ref{fig:txt_seg}, \ref{fig:txt_seg_zoom} &  $z$ & $A$ & $x$ & &  \\
\midrule
ROF-$\mathrm{Id}$ & (c), (d) & $\widehat{h}_{\mathrm{LR}}$ & $\mathrm{Id}$ & $h$ & TV & $\sigma^2 \mathrm{Id}$\\
ROF-$\mathcal{S}$ & (e), (f) & $\widehat{h}_{\mathrm{LR}}$ & $\mathrm{Id}$ & $h$ & TV &  $\mathcal{S}$\\
\textit{Joint} & (g), (h) & $\log_2(\mathcal{L})$ & $J$ & $(h,v)$ &  $\Psi_{\mathrm{J}}$ & $\mathcal{S}$\\
\textit{Coupled} & (i), (j) & $\log_2(\mathcal{L})$ & $J$ & $(h,v)$ & $\Psi_{\mathrm{C}}$ & $\mathcal{S}$\\
\bottomrule
\end{tabular}}
\caption{Four different settings considered in the experiments of local regularity-based texture segmentation with automated choice of hyperparameters procedures. $\widehat{h}_{\mathrm{LR}}$ stands for the minimizer of $\eqref{eq:defF}$, $\mathcal{L}$ denotes the wavelet leaders of the image to analyze, TV stands for total-variation penalization as defined in \eqref{eq:l12_2D}, and with $h$, $v$, $\Psi_{\mathrm{J}}$ and $\Psi_{\mathrm{C}}$, $\mathcal{S}$ are defined in this section denote respectively the local regularity, the local variance, the Joint penalization, the Coupled penalization and the covariance matrix. \label{tab:meths}}
\end{table}

\noindent \textbf{Automated hyperparameter tuning} -- Stein based formalism, described in Section~\ref{subsec:auto_lambda}, is used, first, to obtain an estimation of the quadratic risk from $\mathrm{SURE}_{\varepsilon, \delta}(\lambda, \alpha)$, second, for automated tuning of regularization parameters thanks to $\mathrm{SUGAR_{\varepsilon, \delta}}(\lambda, \alpha)$ estimate.\\
For this purpose, it is necessary to provide an estimate of the covariance matrix of the noise.
The estimated noise variance $\sigma^2$ involved in ROF-$\mathrm{Id}$ is obtained from the variance of $\widehat{h}_{\mathrm{LR}}$, while the covariance matrix $\mathcal{S}$ is assimilated to the covariance of the log-\textit{leaders} of the textured image $X$ to be segmented.\\
The Finite Difference step $\varepsilon$, involved in $\mathrm{SURE}_{\varepsilon, \delta}$ and $\mathrm{SUGAR}_{\varepsilon, \delta}$ computation (see Equations~\eqref{eq:SURE}~and~\eqref{eq:SUGAR}) is set to
\begin{align}
\varepsilon = \frac{2\sqrt{\max \mathcal{S}}}{M^{0.3}}
\end{align}
where $M$ is the size of the observation vector and the maximum is taken over all coefficients of the covariance matrix and  
$M = N_1 \times N_2$ in the case of ROF-$\mathrm{Id}$ and ROF-$\mathcal{S}$, $M = (j_2 - j_1 +1)\times N_1\times N_2$ in the case of \textit{Joint} and \textit{Coupled} procedures.\\

\setlength{\tabcolsep}{2.5mm}
\begin{table}[h!]
\centering
\begin{tabular}{lccccc}
\toprule
 & $j_2=2$ & $j_2=3$ & $j_2=4$ & $j_2=5$ & $j_2=6$ \\
 \midrule
$\mu$ & $\boldsymbol{0.29}$ & $0.72$ & $1.20$ & $1.69$ & $2.20$ \\
\bottomrule
\end{tabular}
\caption{\label{tab:gamma_J} Strong-convexity modulus $\mu$ of data-fidelity term of~\eqref{eq:nl}, for fixed $j_1=1$ and varied $j_2$. The bold entry correspond to the range of scales used in the experiments.}
\end{table}

\noindent \textbf{Accuracy of the automated tuning} -- 
Grid search minimization of $\mathrm{SURE}_{\varepsilon, \delta}(\lambda,\alpha)$ (Algorithm~\ref{alg:grid}) being costly, due to the large number of Algorithm~\ref{algo:pd_sure} runs required, it is performed on a zoomed image of $281 \times 231$ pixels, presented in Figure~\ref{fig:txt_seg_zoom}(a).
Then, automated tuning of $\lambda$ and $\alpha$ from Algorithm~\ref{alg:BFGS}, based on $\mathrm{SUGAR}_{\varepsilon, \delta}(\lambda, \alpha)$, is performed on the same zoomed image.\\
In practice, $\mathrm{SURE}_{\varepsilon, \delta}(\lambda)$ is computed on 15 values of the hyperparameter $\lambda$ for ROF-$\mathrm{Id}$ (Figure~\ref{fig:SURE}(a)) and ROF-$\mathcal{S}$ (Figure~\ref{fig:SURE}(b)) procedures, and over a $15\times 15$ grid of hyperparameters $(\lambda,\alpha)$ for \textit{Joint} (Figure~\ref{fig:SURE}(c)) and \textit{Coupled} (Figure~\ref{fig:SURE}(d)) methods.
The grid search minimum, $\Lambda_{\mathrm{grid}}$, indicated by the `+' symbol, is compared to the optimal regularization parameters found applying Algorithm~\ref{alg:BFGS}, $\Lambda_{\mathrm{BFGS}}$, indicated by the `$\color{bclair} \boldsymbol{\ast}$' symbol.
The optimal parameters $\Lambda_{\mathrm{grid}}$ and $\Lambda_{\mathrm{BFGS}}$ appear to coincide perfectly for ROF-$\mathrm{Id}$ and \textit{Joint} procedures.
As for ROF-$\mathcal{S}$ and \textit{Coupled} strategies, even though they are different, they are consistent with $\mathrm{SURE}_{\varepsilon, \delta}$ profile, in the sense that they correspond to similar values of $\mathrm{SURE}_{\varepsilon, \delta}$.
We observed that, while grid search minimization (Algorithm~\ref{alg:grid}) required $225$ runs of Algorithm~\ref{algo:pd_sure} for \textit{Joint} and \textit{Coupled} methods, the automated tuning via BFGS quasi-Newton minimization (Algorithm~\ref{alg:BFGS}) needed no more than 50 runs of Algorithm~\ref{algo:pd_sugar}.
Hence, when several parameters are involved, an automated strategy (Algorithm~\ref{alg:BFGS}) is significantly faster than a grid search (Algorithm~\ref{alg:grid}).\\

\begin{figure}[h!]
\centering
\begin{tabular}{rr}
\multicolumn{1}{c}{(a)~ROF-$\mathrm{Id}$} & \multicolumn{1}{c}{(b)~ROF-$\mathcal{S}$}  \\
 \includegraphics[width = 4.5cm]{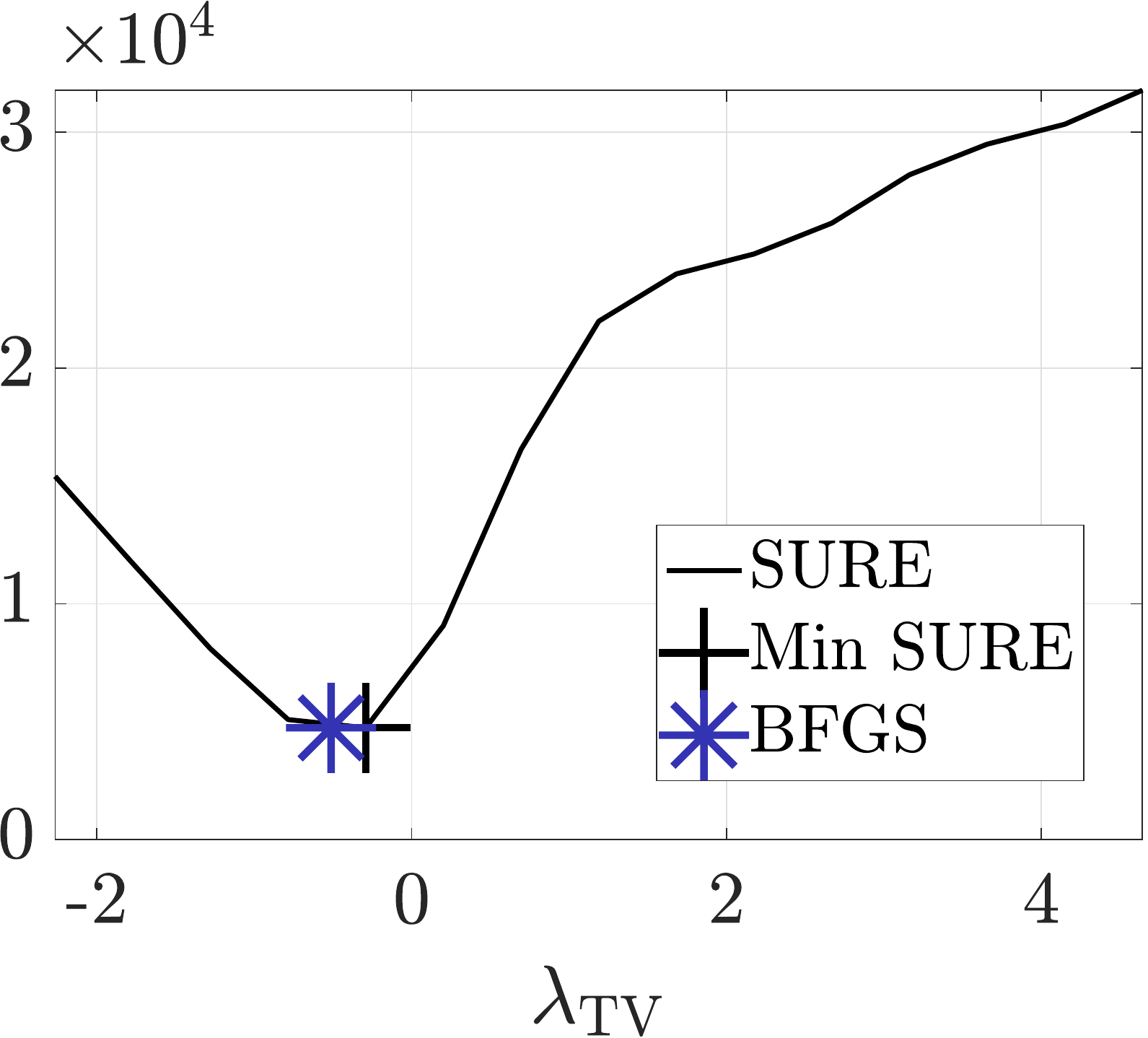} & \includegraphics[width = 4.5cm]{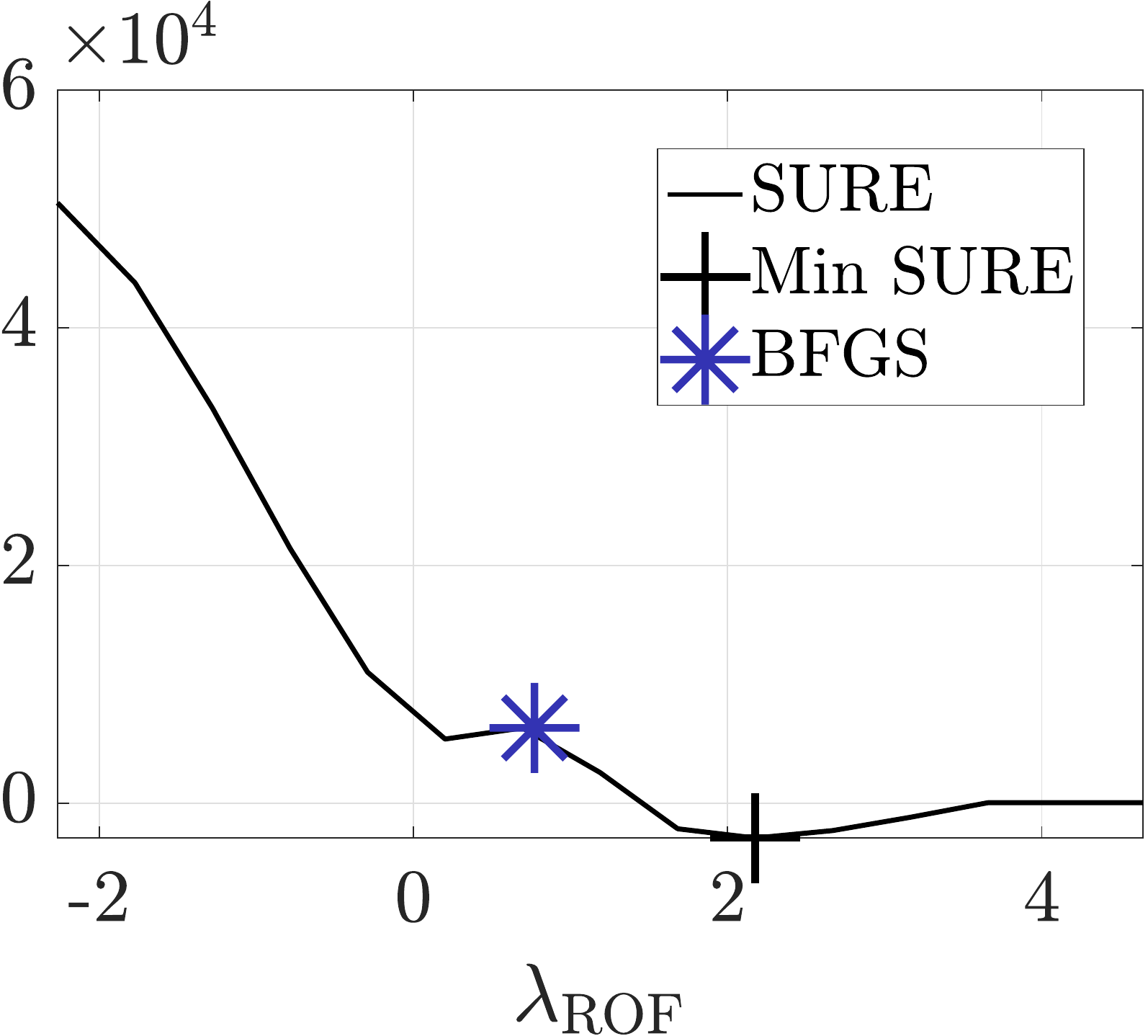} \\
 \noalign{\vspace{3mm}}
 \multicolumn{1}{c}{(c)~Joint}  & \multicolumn{1}{c}{(d)~Coupled}  \\
\includegraphics[width = 5cm]{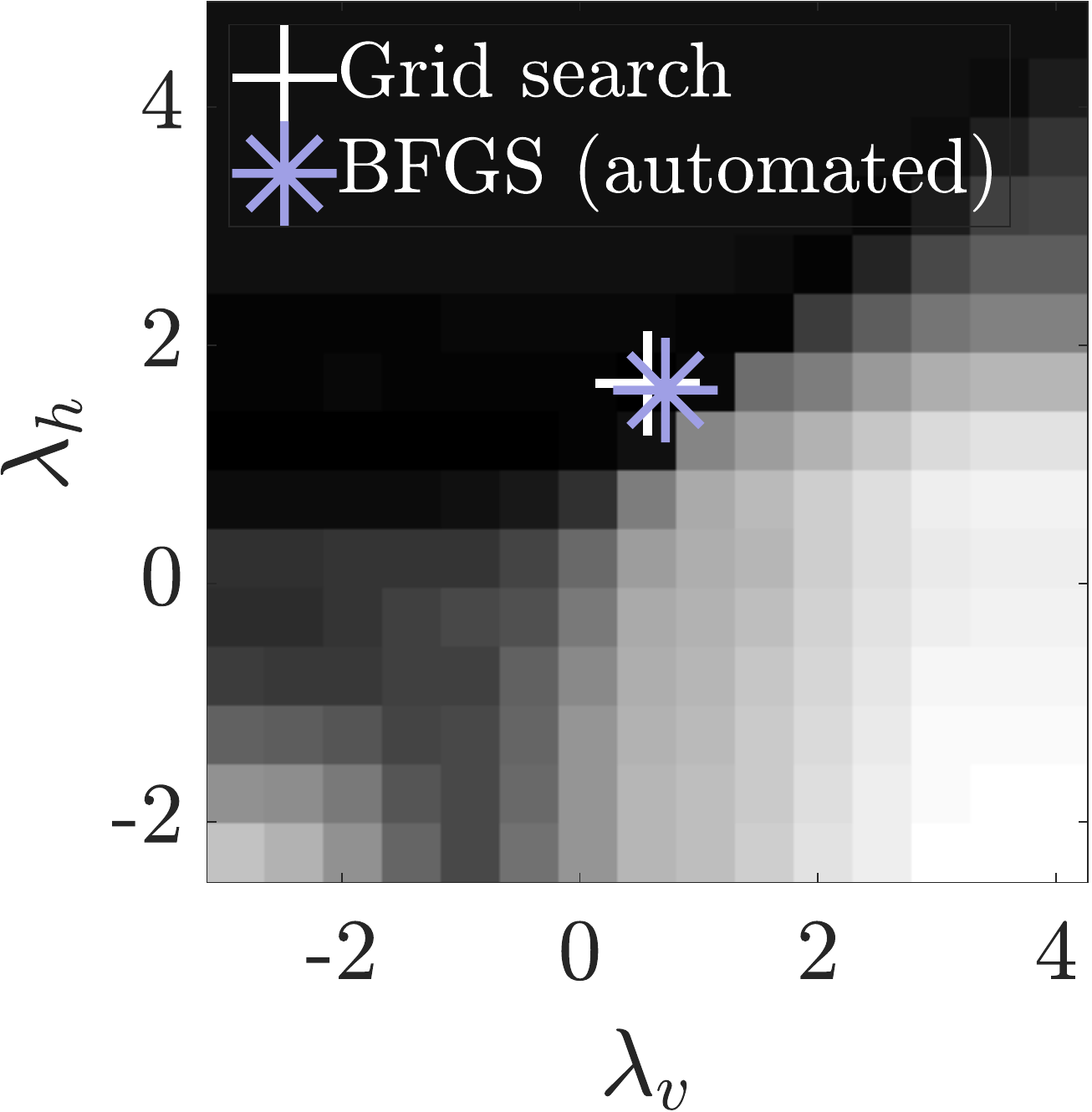} &\includegraphics[width = 5cm]{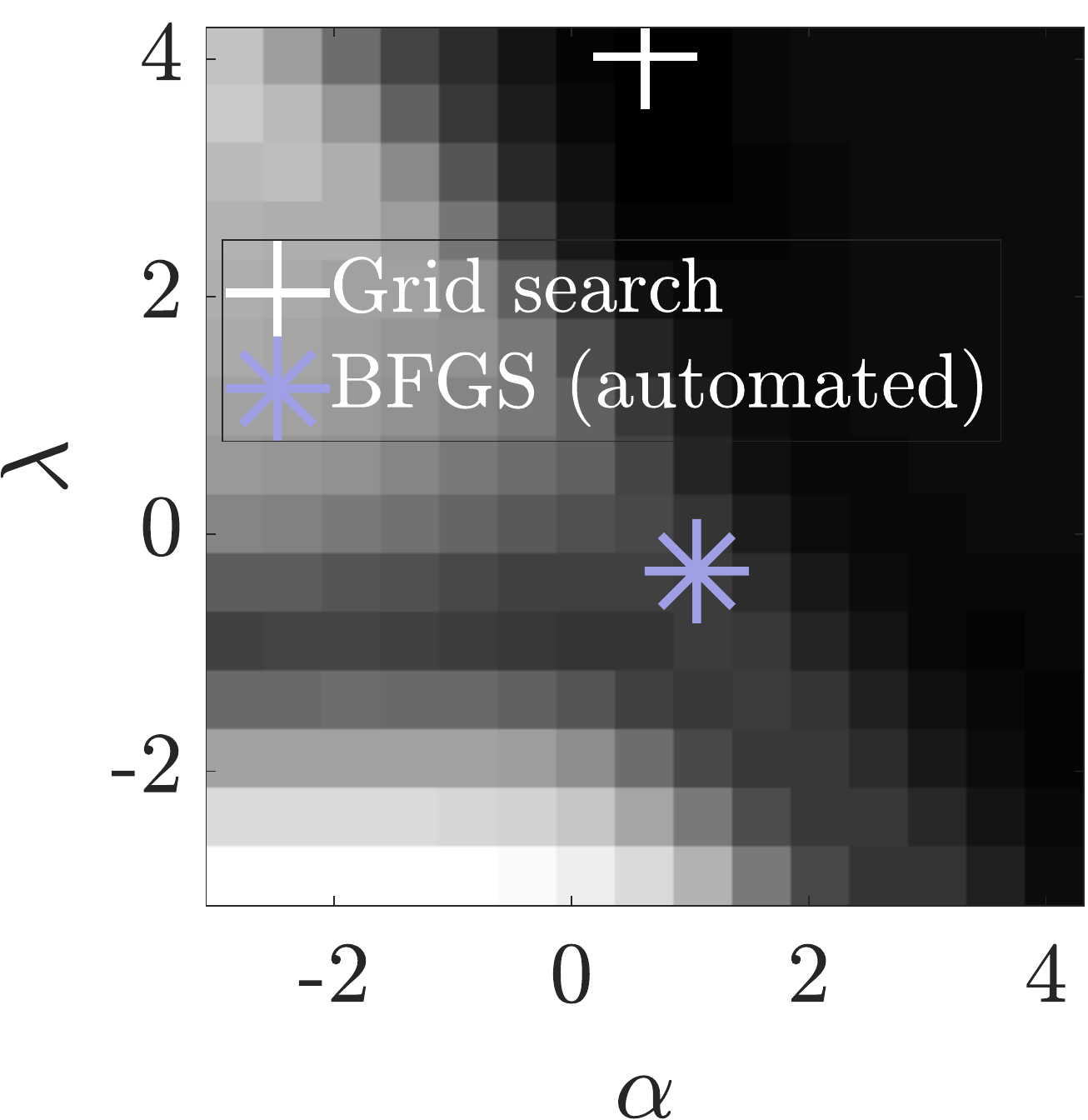} 
\end{tabular}
\caption{Grid search stategy to minimize SURE for the segmentation of a \textit{zoomed} multiphase flow image. \label{fig:SURE}}
\end{figure}

\noindent \textbf{Segmentation results} -- 
Figure~\ref{fig:grid_bfgs_plots} indicates that the automated selection of regularization parameters is consistent with $\mathrm{SURE}_{\varepsilon,\delta}$ minimization.
Hence, the complete images of $1626\times1160$ pixels will be analyzed only with Algorithm~\ref{alg:BFGS}.
The corresponding segmentation results are presented in Figures~\ref{fig:txt_seg}~and~\ref{fig:txt_seg_zoom}. 
State-of-the-art ROF-$\mathrm{Id}$ and ROF-$\mathcal{S}$ procedures yield regularized $\widehat{h}_{\mathrm{TV/ROF}}$ presenting artifacts, as observed in Figures~\ref{fig:txt_seg}(d)~and~\ref{fig:txt_seg}(f), and hence lead to inaccurate segmentation, cf. Figures~\ref{fig:txt_seg}(c)~and~\ref{fig:txt_seg}(e).
In addition, a key point in such experiments is to estimate precisely the contact surface between the liquid and gas. Both T-ROF-$\mathrm{Id}$ and T-ROF-$\mathcal{S}$ (Figures~\ref{fig:txt_seg}(c)~and~(e)) present irregular contours, which are not representative of the real contours and strongly overestimate bubble perimeters. 
\textit{Joint} and \textit{Coupled} procedures, taking into account both the local regularity and the local variance yield more regular contours.
In addition, the \textit{Joint} and \textit{Coupled} methods detect less artifacts (see Figures~\ref{fig:txt_seg}(g)~and~\ref{fig:txt_seg}(i)).
However, the \textit{Joint} estimate $\widehat{h}_{\mathrm{J}}$ appears to be over-regularized, leading to non-detection of small bubbles in the segmentation of Figure~\ref{fig:txt_seg}(g).
The \textit{Coupled} procedure turns out to perform a satisfactory compromise, avoiding artifacts, yet, detecting small gas bubbles, as illustrated in Figures~\ref{fig:txt_seg}(i)~and~\ref{fig:txt_seg}(j).

\section{Conclusion and perspectives} 

The present work has described a unified framework for signal/image non linear filtering, formulated as an inverse problem, that can actually be affiliated to several functional minimization problems encountered in statistical (nonlinear) physics. 
This inverse problem formulation aims at favoring piecewise homogeneous signal and images, that naturally correspond to solutions on numerous problems in nonlinear physics, often very different in nature. 
Piecewise homogeneity assessment entails non smooth convex optimization, here handled via proximal operators. 
In addition to yielding relevant piecewise homogeneous estimates, the proposed framework also achieves an automated and data-driven tuning of  hyperparameters inherently present in inverse problems and nonlinear filtering, thus avoiding the burden of conducting a prone to error and sometimes lacking reproductibily expert inspection.
The potential and interest of nonlinear filtering has been illustrated at work on two, different in nature, real nonlinear physics experiments (low confinement solid friction and porous media multiphase flow). 
However, the approach has a fairly general level of applicability and a documented {\sc Matlab} toolbox both for multivariate signals and images, implementing both the nonlinear filtering favoring piecewise homogeneity and the automated data-driven hyperparameter selection, has been made publicly available at \texttt{https://github.com/bpascal-fr/stein-piecewise-filtering}.

\bibliographystyle{plain}
\bibliography{abbr,biblio}

\end{document}